\newcommand{\GG}[1]{}
\theoremstyle{definition}
\newtheorem*{theorem*}{Theorem}
\newtheorem{theorem}{Theorem}
\newtheorem{lemma}{Lemma}
\newtheorem*{corollary*}{Corollary}
\def\sumn{\sum_{j=1}^n}
\def\Hyper{\text{HyperGeo}}
\def\Tau{\bm{\mathcal{T}}}
\def\obs{\text{obs}}
\begin{document}
\doublespacing
\title{\bf Exact confidence intervals for the average causal effect on a binary outcome}
\author{Xinran Li~\footnote{Ph.D. Candidate, Department of Statistics, Harvard University}
~and Peng Ding~\footnote{Postdoctoral Research Fellow, Department of Epidemiology, Harvard T. H. Chan School of Public Health}~\footnote{Corresponding author: Peng Ding, Email: \texttt{pengdingpku@gmail.com}}
}
\date{}
\maketitle

\begin{abstract}
Based on the physical randomization of completely randomized experiments, \citet*{Rigdon:2015} propose two approaches to obtaining exact confidence intervals for the average causal effect on a binary outcome. They construct the first confidence interval by combining, with the Bonferroni adjustment, the prediction sets for treatment effects among treatment and control groups, and the second one by inverting a series of randomization tests. With sample size $n$, their second approach requires performing $O(n^4)$ randomization tests. We demonstrate that the physical randomization also justifies other ways to constructing exact confidence intervals that are more computationally efficient. By exploiting recent advances in hypergeometric confidence intervals and the stochastic order information of randomization tests, we propose approaches that either do not need to invoke Monte Carlo, or require performing at most $O(n^2)$ randomization tests. We provide technical details and R code in the Supplementary Material. 

\noindent {\bf Keywords}: Causal inference; completely randomized experiment; potential outcome; randomization test; two by two table
\end{abstract}

\section{Notation and framework}

We extend the notation in \citet{Rigdon:2015}. In a completely randomized experiment with $n$ units, let $Z_j$ and $Y_j$ denote the binary treatment assignment and the binary outcome for unit $j$. We define $y_j(1)$ and $y_j(0)$ as the potential outcomes of unit $j$ under treatment and control, and let $N_{ik} = \#\{j: y_j(1)=i, y_j(0)=k\}$ for $i, k=0,1$. Following \citet{Ding:2015}, the potential table $\boldsymbol{N}=(N_{11}, N_{10}, N_{01}, N_{00})$  summarizes the potential outcomes for all units. The total numbers of units with treatment and control potential outcomes being one are $N_{1+}=\sumn y_j(1)$ and $N_{+1}=\sumn y_j(0)$, respectively. The individual causal effect for unit $j$ is $\delta_j=y_j(1)-y_j(0)$, and the average causal effect is $\tau(\boldsymbol{N}) = \sumn \delta_j/n = (N_{1+}-N_{+1})/n =  (N_{10}-N_{01})/n$. Here we emphasize that $\tau(\boldsymbol{N})$ is a function of $\boldsymbol{N}$, and later we write it as $\tau$ for simplicity. Let $\boldsymbol{Z}=(Z_1, Z_2, \ldots, Z_n)$ be the treatment assignment vector, and the treated group is a simple random sample of size $m$ from the $n$ experimental units. The observed outcome of unit $j$ is $Y_j=Z_jy_j(1)+(1-Z_j)y_j(0)$, a deterministic function of $Z_j$ and the potential outcomes $(y_j(1), y_j(0))$. We can summarize the observed data by four counts $n_{zy} = \#\{ j: Z_j = z, Y_j = y  \}$ for $z,y = 0,1$, and call $\boldsymbol{n} = (n_{11}, n_{10}, n_{01}, n_{00})$ the observed table. The intuitive estimator, $\hat{\tau} = n_{11} /m - n_{01} /(n-m)$, is unbiased for $\tau.$


Before observing the data, the potential table $\boldsymbol{N}$ can take any values as long as the sum of the $N_{ik}$'s is $n$. After obtaining $\boldsymbol{n}$, the data put some restrictions on the potential table. 
A potential table $\boldsymbol{N}$ is compatible with the observed table $\boldsymbol{n}$, if there exist potential outcomes $\{  (y_j(1), y_j(0)) \}_{j=1}^n$, summarized by $\boldsymbol{N}$, that give the observed table $\boldsymbol{n}$ under the treatment assignment $\boldsymbol{Z}$.


\begin{theorem}\label{compat_st}
A potential table $\boldsymbol{N} $ is compatible with the observed table $\boldsymbol{n} $  if and only if
\begin{align*}
\max\{0, n_{11}-N_{10}, N_{11}-n_{01}, N_{+1} -n_{10}-n_{01}\} \leq 
\min\{N_{11}, n_{11}, N_{+1}-n_{01}, n-N_{10}-n_{01}-n_{10}\} .
\end{align*}
\end{theorem}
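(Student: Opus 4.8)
The plan is to reduce compatibility to the solvability of a small linear system in nonnegative integers and then to solve that system explicitly. Introduce refined counts: for each type $(i,k)\in\{0,1\}^2$, let $a_{ik}$ be the number of treated units with $(y_j(1),y_j(0))=(i,k)$ and $b_{ik}$ the number of control units of that type. Because a treated unit's observed outcome equals $y_j(1)$ and a control unit's equals $y_j(0)$, the observed table produced by a set of potential outcomes depends only on how the units of each type are split between treatment and control; hence a configuration of potential outcomes summarized by $\boldsymbol{N}$ reproduces $\boldsymbol{n}$ under $\boldsymbol{Z}$ if and only if there exist nonnegative integers $a_{ik},b_{ik}$ with
\begin{align*}
&a_{ik}+b_{ik}=N_{ik}\ \ (i,k\in\{0,1\}),\\
&a_{11}+a_{10}=n_{11},\quad a_{01}+a_{00}=n_{10},\quad b_{11}+b_{01}=n_{01},\quad b_{10}+b_{00}=n_{00}.
\end{align*}
The ``only if'' direction is immediate (read off $a_{ik},b_{ik}$ from the given potential outcomes), and for ``if'' any nonnegative integer solution is realized by assigning, inside the treated group, $a_{ik}$ units to type $(i,k)$, and similarly inside the control group, the row/column sums guaranteeing the totals match.

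Next I would solve this system. Summing the four $N$-equations gives $n$, while summing $a_{11}+a_{10}=n_{11}$ and $a_{01}+a_{00}=n_{10}$ gives $m$, so the eight equations are dependent and the solution set is a one-parameter family. Taking $x=a_{11}$ as the free parameter, I would back-substitute to write the remaining seven unknowns as affine functions of $x$ with integer coefficients, e.g.\ $a_{10}=n_{11}-x$, $b_{11}=N_{11}-x$, $b_{01}=n_{01}-N_{11}+x$, $a_{01}=N_{+1}-n_{01}-x$, $a_{00}=n_{10}-N_{+1}+n_{01}+x$, $b_{10}=N_{10}-n_{11}+x$, $b_{00}=n-N_{10}-n_{10}-n_{01}-x$, and then check that the remaining equation $b_{10}+b_{00}=n_{00}$ holds automatically (it does, using $N_{11}+N_{10}+N_{01}+N_{00}=n$).

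Imposing $a_{ik}\ge 0$ and $b_{ik}\ge 0$ for all $i,k$ then becomes eight one-sided bounds on $x$: four lower bounds with maximum $\max\{0,\ n_{11}-N_{10},\ N_{11}-n_{01},\ N_{+1}-n_{10}-n_{01}\}$ and four upper bounds with minimum $\min\{N_{11},\ n_{11},\ N_{+1}-n_{01},\ n-N_{10}-n_{01}-n_{10}\}$. Since all these bounds are integers, an integer $x$ meeting them exists if and only if that maximum does not exceed that minimum, which is precisely the claimed inequality, completing the proof. The only genuine work is the bookkeeping in the previous paragraph: carrying out the back-substitution carefully and matching each of the eight nonnegativity constraints to the correct term in the asserted $\max$ and $\min$; I expect this routine matching to be the main obstacle, together with being careful that integrality of $x$ is automatic once the endpoints are integers.
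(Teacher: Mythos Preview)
Your proposal is correct and follows essentially the same approach as the paper: both reduce compatibility to the existence of an integer $x_{11}$ (your $x=a_{11}$) counting treated units of type $(1,1)$, express the remaining cell counts affinely in this single parameter, and read off the eight nonnegativity constraints as the four lower and four upper bounds appearing in the statement. The only difference is notational: you carry both $a_{ik}$ and $b_{ik}$ explicitly, whereas the paper works with $x_{ik}$ and the implicit complements $N_{ik}-x_{ik}$, but the linear algebra and the resulting bounds are identical.
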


Theorem \ref{compat_st} gives an easy-to-check condition, which plays an important role in our later discussion. For all potential tables compatible with the observed table, their $\tau$ values must be equal to some $k/n$, with integer $k$ between $-(n_{10}+n_{01})$ and $ n_{11}+n_{00}$ \citep{Rigdon:2015}.

\section{Confidence intervals without Monte Carlo}
\label{sec:confidence_without_MC}
We propose two approaches to constructing confidence intervals for $\tau$ based on the hypergeometric distribution, which avoid Monte Carlo and are easy to compute. Let $X\sim \Hyper(A,T,S)$ denote the hypergeometric distribution representing the number of units having some attribute in a simple random sample of size $S$, which are drawn from $T$ units with $A$ units having this attribute. Recently, \citet{Wang:2015} improves classical hypergeometric confidence intervals, and proposes an optimal procedure to construct a confidence interval for $A$ based on $(T,S,X)$. Our discussion below relies on this confidence interval for $A$ based on a $\Hyper(A,T,S)$ random variable $X$.

\subsection{Combining confidence intervals for $N_{1+}$ and $N_{+1}$}
\label{subsec:combine_two_CI}
We can construct an exact confidence interval for $\tau = (N_{1+}-N_{+1})/n$, by combining confidence intervals for $N_{1+}$ and $N_{+1}$ with the Bonferroni adjustment. Because the treated and control units are simple random samples of the $n$ units in a completely randomized experiment, we have
\begin{align*}
n_{11} \sim \Hyper\left(N_{1+}, n, m\right), \quad
n_{01} \sim \Hyper\left( N_{+1}, n, n-m\right).
\end{align*}
We first obtain $(1-\alpha/2)$ confidence intervals, $[N_{1+}^L, N_{1+}^U]$ and $[N_{+1}^L, N_{+1}^U]$,  for $N_{1+}$ and $N_{+1}$, and then use $[ (N_{1+}^L -N_{+1}^U)/n,  (N_{1+}^U -N_{+1}^L)/n   ]$ as a $(1-\alpha)$ confidence interval for $\tau$.

\subsection{A test statistic with simple null distributions}\label{invert_easy_null}
We can construct exact confidence intervals for $\tau$ by inverting a series of randomization tests. However, the null distributions of $| \hat{\tau}-\tau | $ is complex \citep{Rigdon:2015}.   If we use the difference between the average causal effect on the treated units and $\hat{\tau}$ as the test statistic, then the null distribution has a simple form. The test statistic 
\begin{align}\label{causal_effect_on_treat}
\frac{1}{m}\sumn Z_j\delta_j-\hat{\tau} 
= \frac{n}{m(n-m)}n_{01}-\frac{1}{m}N_{+1}
\end{align}
is equivalent to $n_{01} \sim \Hyper(N_{+1}, n, n-m)$, because \eqref{causal_effect_on_treat} is a monotone function of $n_{01}$, its only random component.
Because the null distribution of $n_{01}$ depends only on $N_{+1}$, potential tables with the same value of $N_{+1}$ will yield the same $p$-value under randomization tests. Therefore, we need only to perform $O(n)$ randomization tests according to all possible values of $N_{+1}$ between $n_{01}$ and $n-n_{00}$. The final lower and upper confidence limits for $\tau$ are the minimum and maximum values of $\tau (\boldsymbol{N})$ subject to (a) $\boldsymbol{N}$ is compatible with $\boldsymbol{n}$, and (b) $\boldsymbol{N}$ yields a $p$-value larger than or equal to $\alpha$. Constraint (b) is equivalent to restricting $N_{+1}$ within a $(1-\alpha)$ confidence interval $[N_{+1}^L, N_{+1}^U]$, which helps avoid randomization tests or Monte Carlo.


\section{Two-sided confidence intervals with fewer randomization tests}
\label{subsec:two_side}
We consider two sided confidence intervals for $\tau$ using $|\hat{\tau}-\tau|$ as the test statistic. We define $p_2(\boldsymbol{N}) = P_{\boldsymbol{N}}(|\hat{\tau}-\tau| \geq |\hat{\tau}^{\text{obs}}-\tau|)$ with $\hat{\tau}^{\text{obs}}$ being the realized value of $\hat{\tau}$, which is the $p$-value of potential table $\boldsymbol{N}$, or equivalently a null hypothesis, that is compatible with the observed table. 
We need to find all potential tables with $p_2 \geq \alpha$, then use Theorem \ref{compat_st} to find the compatible ones among them, and eventually find the maximum and minimum $\tau$ values. Without loss of generality, we assume $m\leq n/2$; otherwise we can switch the labels of $Z$. 
We consider the potential tables with $\tau \leq \hat{\tau}^\obs$ and $\tau \geq \hat{\tau}^\obs$ separately. 
We first focus on the potential tables with $\tau \leq \hat{\tau}^\obs$, and rule out the ones with $p_2<\alpha$.



For given $N_{11}$ and $N_{01}$, we define $\underline{N}_{10}(N_{11},N_{01})$ as the minimum value of $N_{10}$ such that $p_2(\boldsymbol{N}) \geq \alpha$ and $\tau(\boldsymbol{N}) \leq \hat{\tau}^{\text{obs}}$, with $\boldsymbol{N} = (N_{11}, N_{10}, N_{01}, n-N_{11}-N_{10}-N_{01})$ being a potential table. If there is no such $N_{10}$, then we define $\underline{N}_{10}(N_{11},N_{01}) = \lfloor N_{01} + n \hat{\tau}^{\obs} \rfloor+1$, the smallest value of $N_{10}$ such that $\tau(\boldsymbol{N}) > \hat{\tau}^\obs$, where $\lfloor x \rfloor$ is the largest integer less than or equal to $x$.

\begin{theorem}\label{thm::order-two-sided}
\begin{itemize}
\item[(1)] 
If $N_{01} \leq N_{01}'$, then $\underline{N}_{10}(N_{11}, N_{01}) \leq \underline{N}_{10}(N_{11}, N_{01}')$.

\item[(2)] 
In balanced experiments with $m=n/2$, consider a potential table $\boldsymbol{N}$ with $\tau(\boldsymbol{N}) \leq \hat{\tau}^{\text{obs}}$. $p_2(\boldsymbol{N}) \geq \alpha$ if and only if $N_{10} \geq \underline{N}_{10}(N_{11}, N_{01})$.

\end{itemize}
\end{theorem}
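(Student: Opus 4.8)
The plan is to handle both parts through one device: a coupling of the randomization distributions of two potential tables differing in a single unit's potential outcome, run under a common treatment assignment $\boldsymbol Z$. Take a unit $u$ of type $(0,1)$ in $\boldsymbol N$ and flip $y_u(0)$ from $1$ to $0$, so $\boldsymbol N\to(N_{11},N_{10},N_{01}-1,N_{00}+1)$; under the shared $\boldsymbol Z$ the observed table is unchanged when $u$ is treated and has $n_{01}$ smaller by one when $u$ is a control, so the new statistic equals $\hat\tau+\tfrac1{n-m}\mathbbm{1}\{Z_u=0\}\ge\hat\tau$ pointwise, while $\tau$ rises by exactly $1/n$. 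Symmetrically, taking a unit of type $(0,0)$ and flipping $y_u(1)$ from $0$ to $1$, so $\boldsymbol N\to(N_{11},N_{10}+1,N_{01},N_{00}-1)$, gives new statistic $\hat\tau+\tfrac1m\mathbbm{1}\{Z_u=1\}\ge\hat\tau$ pointwise, again with $\tau$ up by $1/n$. These two facts, together with the standing assumption $m\le n/2$ (so $\tfrac1{n-m}\le\tfrac2n$), are the whole engine.

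\textbf{Part (1).} I would first show that for fixed $N_{11},N_{10}$ the map $N_{01}\mapsto p_2(\boldsymbol N)$ is non-increasing over the (upward-closed in $N_{01}$) set of $N_{01}$ with $\tau(\boldsymbol N)\le\hat\tau^{\obs}$. For such a table, $p_2(\boldsymbol N)=P(\hat\tau\ge\hat\tau^{\obs})+P(\hat\tau\le 2\tau-\hat\tau^{\obs})$ when $\tau<\hat\tau^{\obs}$ (disjoint events) and $p_2=1$ when $\tau=\hat\tau^{\obs}$. Comparing $N_{01}$ with $N_{01}-1$ through the first coupling: the term $P(\hat\tau\ge\hat\tau^{\obs})$ only grows since $\hat\tau$ grows pointwise; and on $\{\hat\tau\le 2\tau-\hat\tau^{\obs}\}$ the coupled statistic is at most $\hat\tau+\tfrac1{n-m}\le 2\tau-\hat\tau^{\obs}+\tfrac2n=2(\tau+\tfrac1n)-\hat\tau^{\obs}$, so that event is contained in the matching event for $N_{01}-1$ — this is precisely where $m\le n/2$ is used. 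Hence decreasing $N_{01}$ by one (while staying in the region) cannot decrease $p_2$, and iterating gives the claimed monotonicity. To pass from this to $\underline N_{10}(N_{11},N_{01})\le\underline N_{10}(N_{11},N_{01}')$ for $N_{01}\le N_{01}'$, set $b=\underline N_{10}(N_{11},N_{01}')$: if $b$ is the default value $\lfloor N_{01}'+n\hat\tau^{\obs}\rfloor+1$ then $\underline N_{10}(N_{11},N_{01})\le\lfloor N_{01}+n\hat\tau^{\obs}\rfloor+1\le b$; if $b$ is a genuine minimizer and $b-N_{01}\le n\hat\tau^{\obs}$, then the whole segment $\{(N_{11},b,\nu,\cdot):N_{01}\le\nu\le N_{01}'\}$ stays in the region and the monotonicity gives $p_2(N_{11},b,N_{01},\cdot)\ge p_2(N_{11},b,N_{01}',\cdot)\ge\alpha$, so $\underline N_{10}(N_{11},N_{01})\le b$; and if $b-N_{01}>n\hat\tau^{\obs}$, no $N_{10}\ge b$ is admissible for $N_{01}$ so again $\underline N_{10}(N_{11},N_{01})\le\lfloor N_{01}+n\hat\tau^{\obs}\rfloor+1\le b$.

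\textbf{Part (2).} In a balanced design, $\boldsymbol Z\mapsto\boldsymbol 1-\boldsymbol Z$ is a measure-preserving bijection of the assignment set, and a short computation shows it sends $\hat\tau$ to $2\tau-\hat\tau$; hence $\hat\tau-\tau$ is symmetric about $0$ under the randomization of every potential table. Therefore, for a table with $\tau\le\hat\tau^{\obs}$ we get $p_2(\boldsymbol N)=2P_{\boldsymbol N}(\hat\tau\ge\hat\tau^{\obs})$ when $\tau<\hat\tau^{\obs}$ (and $p_2=1$ when $\tau=\hat\tau^{\obs}$). Now the second coupling shows that raising $N_{10}$ by one (with $N_{11},N_{01}$ fixed) raises $\hat\tau$ pointwise, hence raises $P(\hat\tau\ge\hat\tau^{\obs})$, hence raises $p_2$; so $p_2(\boldsymbol N)$ is non-decreasing in $N_{10}$ over $\{N_{10}:\tau(\boldsymbol N)\le\hat\tau^{\obs}\}$, and the ``if and only if'' is immediate from the definition of $\underline N_{10}(N_{11},N_{01})$ as the smallest admissible $N_{10}$ with $p_2\ge\alpha$, the default convention covering the case with no such $N_{10}$.

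I expect the real difficulty to lie not in any single estimate but in the bookkeeping of part (1): because the admissible range $\{\tau\le\hat\tau^{\obs}\}$ itself slides as $N_{01}$ varies, one must carefully track when a comparison stays inside that range and dispatch the ``default'' branch correctly; and one must notice that the second coupling controls the lower tail of $p_2$ only when $1/m\le 2/n$, i.e. $m\ge n/2$, which together with the standing $m\le n/2$ forces $m=n/2$ and is exactly why part (2) is stated only for balanced designs.
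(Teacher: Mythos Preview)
Your argument is correct and rests on the same single-unit coupling as the paper (their Lemmas~A.1--A.6): flip one unit's potential outcome, run both tables under a common $\boldsymbol Z$, and read off $\hat\tau'=\hat\tau+\tfrac{1}{n-m}\mathbbm 1\{Z_u=0\}$ or $\hat\tau'=\hat\tau+\tfrac{1}{m}\mathbbm 1\{Z_u=1\}$. For part~(1) your two-tail decomposition $p_2=P(\hat\tau\ge\hat\tau^{\obs})+P(\hat\tau\le 2\tau-\hat\tau^{\obs})$ is just an unpacking of the paper's $p_2=P(\max\{\hat\tau,2\tau-\hat\tau\}\ge\hat\tau^{\obs})$, and your case split on whether $b=\underline N_{10}(N_{11},N_{01}')$ is the default value or a genuine minimizer (and, in the latter case, whether $(N_{11},b,N_{01},\cdot)$ still lies in $\{\tau\le\hat\tau^{\obs}\}$) matches the paper's proof with slightly more explicit bookkeeping. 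The one genuine difference is your route to part~(2): the paper repeats the two-tail bound using $1/m=2/n$ when $m=n/2$ (their Lemma~A.5), whereas you observe that in a balanced design the involution $\boldsymbol Z\mapsto\boldsymbol 1-\boldsymbol Z$ sends $\hat\tau$ to $2\tau-\hat\tau$, so $\hat\tau-\tau$ is symmetric and $p_2=2P(\hat\tau\ge\hat\tau^{\obs})$ for $\tau<\hat\tau^{\obs}$; monotonicity in $N_{10}$ then follows immediately from the pointwise inequality $\hat\tau'\ge\hat\tau$ without ever touching the lower tail. This buys a cleaner explanation of \emph{why} balance is exactly what is needed for the ``if and only if'', at the cost of an extra (easy) symmetry lemma; the paper's version is more uniform with part~(1) but slightly obscures that point.
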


The computation burden arises because we need to perform randomization tests for all potential tables compatible with the observed table. Fortunately, Theorem \ref{thm::order-two-sided}(1) provides useful order information to reduce the number of randomization tests. We first assume that $N_{11}$ is fixed. When $N_{01} = 0$, we find $\underline{N}_{10}(N_{11}, 0)$ by performing randomization tests starting from $N_{10}=0$. When $N_{01}$ increases to $1$, we find $\underline{N}_{10}(N_{11}, 1)$ by performing randomization tests starting from $N_{10} = \underline{N}_{10}(N_{11}, 0)$ according to Theorem \ref{thm::order-two-sided}(1). Sequentially, when $N_{01}$ increases by $1$, we find $\underline{N}_{10}(N_{11}, N_{01})$ by performing randomization tests starting from $N_{10} = \underline{N}_{10}(N_{11}, N_{01}-1)$. We repeat this process until $N_{10}$ increases to $n-N_{11}$. For a fixed $N_{11}$, we need to perform at most $O(n)$ randomization tests. We implement the above procedure with $N_{11}$ increasing from $0$ to $n_{11}+n_{01}$, which requires at most $O(n^2)$ randomization tests in total.  

As long as we find $\underline{N}_{10}(N_{11}, N_{01})$ for all possible $(N_{11}, N_{01})$, 
we accept the potential tables compatible with the observed table such that $N_{10}\geq \underline{N}_{10}(N_{11}, N_{01})$ and $\tau(\boldsymbol{N})\leq \hat{\tau}^\obs$, with $\boldsymbol{N} = (N_{11},N_{10},N_{01},n-N_{11}-N_{10}-N_{01})$. By switching the labels of $Y$, we can similarly accept some potential tables with $\tau\geq \hat{\tau}^\obs$. The final lower and upper confidence limits for $\tau$ are the minimum and maximum $\tau$ values of accepted potential tables.


The preceding confidence interval might be slightly wider than the interval obtained by the second approach in \citet*{Rigdon:2015}, because potential tables with $N_{10}\geq \underline{N}_{10}(N_{11}, N_{01})$, $\tau(\boldsymbol{N})\leq \hat{\tau}^\obs$ might not satisfy $p_2\geq \alpha$. However, this interval will be the same as \citet*{Rigdon:2015} in balanced experiments according to Theorem \ref{thm::order-two-sided}(2). Our numerical and extensive simulation studies demonstrate that the above confidence interval coincides with the second one in \citet*{Rigdon:2015} even though the experiments are extremely unbalanced.

In balanced experiments, for every $\tau$ within the above confidence interval, there exists a potential table $\boldsymbol{N}$ compatible with the observed table such that $\tau(\boldsymbol{N})=\tau$ and $p_2(\boldsymbol{N})\geq \alpha$. For general experiments, we can use the same idea to construct an exact one-sided confidence interval with $O(n^2)$ randomization tests. We comment on these two issues in the Supplementary Material.

\section{Numerical examples}
We compare the confidence intervals for $n\tau$ obtained by various procedures. Table \ref{table:numerical_result} shows the observed tables and results. In our examples, the confidence intervals in Section \ref{subsec:two_side} are the same as the second approach in \citet*{Rigdon:2015}, even though some of them are very unbalanced. We have conducted extensive simulations for all observed tables with $n=24$, finding that these two methods give the same $90\%, 95\%$ and $99\%$ confidence intervals.

\begin{table}[ht]
\caption{$95\%$ confidence intervals for $n\tau$. ``2.1'', ``2.2'' and ``3'' denote the methods in Sections \ref{subsec:combine_two_CI}, \ref{invert_easy_null} and \ref{subsec:two_side}; I and II denote the first and second approaches in \citet*{Rigdon:2015}; ``\#(3)'' and ``\#(II)'' denote the numbers of randomization tests needed for ``3'' and ``II.''
}
\label{table:numerical_result}
\begin{center}
\begin{tabular}{|c|c|c|c|c|c|c|c|}
\hline
$\boldsymbol{n}$ & 2.1 & 2.2 & 3 & I & II & \#(3) & \#(II) \\
\hline
$(1, 1, 1, 13)$ & $[-2, 14]$ & $[-1,14]$  & $[-1,14]$ & $[-2,14]$ & $[-1,14]$ & 103 & 112\\
$(2,6, 8, 0)$ & $[-14, -3]$ & $[-14,-2]$ & $[-14,-5]$ & $[-14,-2]$ & $[-14,-5]$ & 113 & 189\\
$(6,0, 11, 3)$ & $[-5, 8]$ & $[-11,7]$  & $[-4,8]$ & $[-5,7]$ & $[-4,8]$ & 283 & 336\\
$(6, 4, 4, 6)$ & $[-6, 12]$ & $[-6, 11]$  & $[-4, 10]$ & $[-6, 12]$ & $[-4, 10]$ & 308 & 1225\\
$(1,  1,  3, 19)$ & $[-4, 20]$ & $[-3, 20]$  & $[-3,   20]$ & $[-4, 20]$ & $[-3, 20]$ & 251 & 320 \\
$(8,4,5,7)$ & $[-4, 14]$ & $[-6, 14]$ &  $[-3, 13]$ & $[-6, 15]$  & $[-3, 13]$ & 421 & 2160\\
\hline
\end{tabular}
\end{center}
\end{table}

\bibliographystyle{plainnat}
\bibliography{causal}

\newpage
\setcounter{page}{1}
\begin{center}
\bf \huge 
Supplementary material
\end{center}

\bigskip

\setcounter{equation}{0}
\setcounter{section}{0}
\setcounter{figure}{0}
\setcounter{example}{0}
\setcounter{proposition}{0}
\setcounter{theorem}{0}
\setcounter{table}{0}

\renewcommand {\theproposition} {A.\arabic{proposition}}
\renewcommand {\theexample} {A.\arabic{example}}
\renewcommand {\thefigure} {A.\arabic{figure}}
\renewcommand {\thetable} {A.\arabic{table}}
\renewcommand {\theequation} {A.\arabic{equation}}
\renewcommand {\thelemma} {A.\arabic{lemma}}
\renewcommand {\thesection} {A.\arabic{section}}
\renewcommand {\thetheorem} {A.\arabic{theorem}}

Section \ref{subsec:one_side} describes a procedure to obtain a one-sided confidence interval using $\hat{\tau}$ as the test statistic, which requires $O(n^2)$ randomization tests. Section \ref{sec:proof_for_theorems} contains proofs of the theorems. Section \ref{sec:more_computation_detail} comments on some computational details. In the following, we use $a \vee b = \max(a,b)$ and $ a\wedge b = \min(a,b).$

\section{One-sided confidence interval}
\label{subsec:one_side}

Without loss of generality, we consider the lower confidence limit for $\tau$ using $\hat{\tau}$ as the test statistic in randomization tests. We define 
$
p_1(\boldsymbol{N}) = P_{\boldsymbol{N}}(\hat{\tau}\geq \hat{\tau}^{\text{obs}})
$  
with $\hat{\tau}^{\text{obs}}$ being the realized value of $\hat{\tau}$, which is the $p$-value for potential table $\boldsymbol{N}$ compatible with observed table. To obtain the $(1-\alpha)$ lower confidence limit, we need to find all potential tables with $p_1 \geq \alpha$, then use Theorem \ref{compat_st} to find the compatible ones among them, and eventually find the minimum $\tau$ value.

To facilitate computation, it is crucial to exploit order information of the potential tables. For given $  N_{11} $ and $  N_{01} $,  we define $\underline{N}_{10}(N_{11}, N_{01})$ as the minimum value of $N_{10}$ such that $p_1(\boldsymbol{N}) \geq \alpha $ with $\boldsymbol{N} = (N_{11},N_{10},N_{01},n-N_{11}-N_{10}-N_{01})$ being a potential table. If there is no such $N_{10}$, then we define $\underline{N}_{10}(N_{11}, N_{01}) = n+1$.

\begin{theorem}
\label{thm::order-one-sided}
\begin{itemize}
\item[(1)] 
If $N_{01}\leq N'_{01}$, then $\underline{N}_{10}(N_{11}, N_{01}) \leq \underline{N}_{10}(N_{11},N'_{01})$.

\item[(2)] 
For a potential table $\boldsymbol{N}$, $p_1(\boldsymbol{N}) \geq \alpha$ if and only if $N_{10} \geq \underline{N}_{10}(N_{11}, N_{01})$.
\end{itemize}
\end{theorem}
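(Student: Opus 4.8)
\textbf{Proof proposal for Theorem~\ref{thm::order-one-sided}.}

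The plan is to analyze the randomization distribution of $\hat\tau$ under a potential table $\boldsymbol{N}$ and extract a monotonicity structure. Recall that under $\boldsymbol{N}$, the only source of randomness is the assignment $\boldsymbol{Z}$, and the observed table $\boldsymbol{n}$ generated from $\boldsymbol{N}$ decomposes as follows: among the $N_{11}$ always-ones, $N_{10}$ treated-responders, $N_{01}$ control-responders, and $N_{00}$ always-zeros, the number assigned to treatment is a multivariate hypergeometric draw. Then $n_{11} = (\text{treated among } N_{11}) + (\text{treated among } N_{10})$ and $n_{01} = (\text{control among } N_{11}) + (\text{control among } N_{01})$, and $\hat\tau = n_{11}/m - n_{01}/(n-m)$. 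The first step is to write down $p_1(\boldsymbol{N}) = P_{\boldsymbol{N}}(\hat\tau \ge \hat\tau^{\obs})$ explicitly in terms of this multivariate hypergeometric law, so that the dependence on the coordinates of $\boldsymbol{N}$ is visible.

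For part (2), I would argue that for fixed $(N_{11}, N_{01})$, the $p$-value $p_1(\boldsymbol{N})$ is nondecreasing in $N_{10}$. Intuitively, increasing $N_{10}$ (while correspondingly decreasing $N_{00}$) stochastically increases $n_{11}$ and leaves the marginal law of $n_{01}$ unchanged, hence stochastically increases $\hat\tau$, which can only increase $P_{\boldsymbol{N}}(\hat\tau \ge \hat\tau^{\obs})$. The cleanest route is a coupling: realize the assignment by first choosing which of the $n$ units are treated, then relabeling units; converting an always-zero unit into a treated-responder unit can only weakly raise $n_{11}$ under the same assignment, so $\hat\tau$ is weakly larger pointwise, giving stochastic dominance of the $\hat\tau$-distribution and hence monotonicity of $p_1$ in $N_{10}$. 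Given monotonicity in $N_{10}$, the set $\{N_{10} : p_1(\boldsymbol{N}) \ge \alpha\}$ is an up-set, so $p_1(\boldsymbol{N}) \ge \alpha$ if and only if $N_{10} \ge \underline{N}_{10}(N_{11}, N_{01})$; the degenerate case where no such $N_{10}$ exists is covered by the convention $\underline{N}_{10}(N_{11},N_{01}) = n+1$, and I should check that case does not break the ``if and only if'' (it holds vacuously on both sides).

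For part (1), fix $N_{11}$ and compare $N_{01} \le N_{01}'$. By part (2) it suffices to show that if a table with parameters $(N_{11}, N_{10}, N_{01}')$ has $p_1 \ge \alpha$, then so does the table with parameters $(N_{11}, N_{10}, N_{01})$ — equivalently, replacing some control-responders ($N_{01}$) by always-zeros ($N_{00}$) does not decrease $p_1$. Again I would use a coupling: under the same set of treated units, turning a control-responder into an always-zero can only weakly decrease $n_{01}$ and does not affect $n_{11}$, hence weakly increases $\hat\tau$ pointwise, so the $\hat\tau$-distribution for $N_{01}$ stochastically dominates that for $N_{01}'$, giving $p_1(N_{11}, N_{10}, N_{01}) \ge p_1(N_{11}, N_{10}, N_{01}')$. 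Taking $N_{10} = \underline{N}_{10}(N_{11}, N_{01}')$ then yields $\underline{N}_{10}(N_{11}, N_{01}) \le \underline{N}_{10}(N_{11}, N_{01}')$.

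The main obstacle is making the coupling arguments fully rigorous while keeping the cell totals consistent: each ``conversion'' of one unit type into another changes two cells of $\boldsymbol{N}$ at once ($N_{10} \leftrightarrow N_{00}$ in part (2), $N_{01} \leftrightarrow N_{00}$ in part (1)), and one must verify that under a common coupling of the assignment vector the observed statistic $\hat\tau$ moves monotonically for \emph{every} realization, not merely in distribution — only then does stochastic dominance, and hence the $p$-value inequality, follow. A secondary point is handling the boundary conventions (no valid $N_{10}$, giving value $n+1$) so the stated equivalences hold without exception; this is routine but should be stated.
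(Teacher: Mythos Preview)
Your proposal is correct and follows essentially the same route as the paper: the paper formalizes your two ``conversions'' as the moves $\boldsymbol{\Delta}=(0,1,0,-1)$ and $(0,0,-1,1)$, proves via the same one-unit coupling that $\hat\tau'(\omega)\ge\hat\tau(\omega)$ pointwise (hence $p_1$ is monotone under each move), and then derives both parts of the theorem from that monotonicity exactly as you outline, including the $n+1$ boundary convention. The only cosmetic difference is that the paper packages the monotonicity as a standalone lemma and proves (1) before (2), whereas you prove (2) first and invoke it for (1); the content is the same.
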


Theorem \ref{thm::order-one-sided} provides useful order information to reduce the number of randomization tests, because we do not need to test for all potential tables. According to Theorem \ref{thm::order-one-sided}(2), we need only to find $\underline{N}_{10}(N_{11}, N_{01})$ for given $  (N_{11}, N_{01}) $.
We first fix $N_{11}$. When $N_{01} = 0$, we find $\underline{N}_{10}(N_{11}, 0)$ by performing randomization tests starting from $N_{10}=0$. When $N_{01}$ increases to $1$, we find $\underline{N}_{10}(N_{11}, 1)$ by performing randomization tests starting from $N_{10} = \underline{N}_{10}(N_{11}, 0)$, because Theorem \ref{thm::order-one-sided}(1) guarantees that $\underline{N}_{10}(N_{11}, 1) \geq \underline{N}_{10}(N_{11}, 0).$ We repeat this process until $N_{10}$ increases to $n-N_{11}$. For a fixed $N_{11}$, we need to perform at most $O(n)$ randomization tests. We implement the above procedure with $N_{11}$ increasing from $0$ to $n_{11}+n_{01}$, which requires at most $O(n^2)$ randomization tests in total. For every $\tau$ within the final confidence interval $[ l, (n_{11}+n_{00})/n ]$, there exists a potential table $\boldsymbol{N}$ compatible with the observed table such that $\tau(\boldsymbol{N})=\tau$ and $p_1(\boldsymbol{N})\geq \alpha$. We comment on the computational details in Section \ref{sec:more_computation_detail}, and show some numerical examples in Table \ref{table:numerical_ex_one_side}.

\begin{table}[ht]
\centering
\caption{$95\%$ one-sided confidence interval}
\label{table:numerical_ex_one_side}
\begin{tabular}{|c|c||c|c|}
\hline
$\boldsymbol{n}$ & CI & $\boldsymbol{n}$ & CI\\
\hline
$(1,1,1,13)$ & $[-1,14]$ & $(2,6,8,0)$ & $[-14,2]$\\
$(6,0,11,3)$ & $[-3, 9]$ & $(6,4,4,6)$ & $[-3,12]$\\
$(1,1,3,19)$ & $[-3,20]$ & $(8,4,5,7)$ & $[-2,15]$\\
\hline
\end{tabular}
\end{table}

\section{Proof of the theorems}
\label{sec:proof_for_theorems}

\begin{proof}[Proof of Theorem \ref{compat_st}]
Let $x_{ik}$ denote the number of units in set $\{j : y_j(1)=i, y_j(0)=k\} $ that are assigned to treatment. We have
\begin{align*} 
&x_{11}+x_{10}+x_{01}+x_{00}=  m,\quad 
x_{11} + x_{10} =  n_{11},\quad 
(N_{11}-x_{11}) + (N_{01}-x_{01}) =  n_{01},\\
&0 \leq x_{ik} \leq N_{ik} \quad (i,k=0,1). 
\end{align*}
The potential table $\boldsymbol{N}$ is compatible with the observed table $\boldsymbol{n}$ if and only if the above equations have integer solutions for $(x_{11}, x_{10}, x_{01}, x_{00})$. The above equations are equivalent to
\begin{align*}
&x_{10} =  n_{11}-x_{11},\quad 
x_{01} =  N_{11}+N_{01}-n_{01}-x_{11},\quad 
x_{00} =  x_{11}+n_{01}+n_{10}-N_{11}-N_{01},\\
&0\leq x_{ik} \leq  N_{ik}\quad  (i,k=0,1).  
\end{align*}
The integer solutions exist if and only if there exists an integer $x_{11}$ satisfying
\begin{align*} 
&0 \leq  x_{11} \leq N_{11},\quad 
0 \leq  n_{11}-x_{11} \leq N_{10},\\
&0 \leq  N_{11}+N_{01}-n_{01}-x_{11} \leq N_{01},\quad 
0 \leq  x_{11}+n_{01}+n_{10}-N_{11}-N_{01} \leq N_{00},
\end{align*}
which are equivalent to
\begin{align*}
&\max\{0, n_{11}-N_{10}, N_{11}-n_{01}, N_{11}+N_{01}-n_{10}-n_{01}\}  \\
\leq x_{11} \leq &\min\{N_{11}, n_{11}, N_{11}+N_{01}-n_{01}, N_{00}+N_{11}+N_{01}-n_{01}-n_{10}\}.
\end{align*}
Therefore, we have proved Theorem \ref{compat_st}.
\end{proof}

In order to prove the theorems, we need to introduce additional notation and lemmas. Define
\begin{align*}
\Tau &=  \{(0,1,0,-1), (-1,1,0,0), (1,0,-1,0), (0,0,-1,1) \},\\
\Tau_0 &=  \{(-1,1,0,0),  (0,0,-1,1) \} \subset \Tau.
\end{align*}

If two potential tables of sample size $n$, $\boldsymbol{N}$ and $\boldsymbol{N}'$, satisfy $\boldsymbol{N}'=\boldsymbol{N}+\boldsymbol{\Delta}$ with $\boldsymbol{\Delta} \in \Tau$, then $\tau( \boldsymbol{N}' ) = \tau(  \boldsymbol{N})  + 1/n$.

\begin{lemma}\label{lemma::diff1}
If $\boldsymbol{N}'=\boldsymbol{N}+\boldsymbol{\Delta}$ with $\boldsymbol{\Delta} \in \Tau$, then we can construct potential outcomes $\{y_j(1), y_j(0)\}_{j=1}^n$ and $\{y_j'(1), y_j'(0)\}_{j=1}^n$ such that only one unit $r$ is different, i.e., 
$(y_r(1),y_r(0))=(a_1,a_2)$, $(y'_r(1),y'_r(0))=(b_1,b_2)$, and $(y_j(1),y_j(0))=(y'_j(1),y'_j(0))$ for all $ j \neq r.$ We show the corresponding values of $\boldsymbol{\Delta}$, $(a_1,a_2)$ and $(b_1,b_2)$  in Table \ref{tb::twotables}.

\begin{table}[ht]
\caption{The difference between two potential tables}\label{tb::twotables}
\begin{center}
\begin{tabular}{|c|c|c|}
\hline
$\boldsymbol{\Delta}$ & $(a_1,a_2)$ & $(b_1,b_2)$\\
\hline
$(0,1,0,-1)$ & $(0,0)$ & $(1,0)$ \\
$(-1,1,0,0)$ & $(1,1)$ & $(1,0)$ \\
$(1,0,-1,0)$ & $(0,1)$ & $(1,1)$ \\
$(0,0,-1,1)$ & $(0,1)$ & $(0,0)$ \\
\hline
\end{tabular}
\end{center}
\end{table}
\end{lemma}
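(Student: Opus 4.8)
\textbf{Proof proposal for Lemma \ref{lemma::diff1}.}

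The plan is to verify the claim directly, case by case, for each of the four values of $\boldsymbol{\Delta} \in \Tau$, by exhibiting an explicit recipe that turns any collection of potential outcomes summarized by $\boldsymbol{N}$ into one summarized by $\boldsymbol{N}' = \boldsymbol{N} + \boldsymbol{\Delta}$, changing only a single unit. First I would observe that the entries of $\boldsymbol{N}$ simply count the units in each of the four cells $\{j : y_j(1)=i, y_j(0)=k\}$, and that each $\boldsymbol{\Delta} \in \Tau$ has exactly one coordinate equal to $+1$ and one equal to $-1$ (the other two being $0$). So, writing $\boldsymbol{\Delta}$ as removing one unit from the cell $(a_1,a_2)$ (where the $-1$ sits) and adding it to the cell $(b_1,b_2)$ (where the $+1$ sits), the move is feasible precisely when the source cell is nonempty, i.e. when $N_{a_1 a_2} \geq 1$; and in that case we pick any unit $r$ in that cell and simply redefine $(y_r'(1), y_r'(0)) = (b_1, b_2)$ while leaving every other unit's pair unchanged.

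The bulk of the argument is then just bookkeeping: for each row of Table \ref{tb::twotables}, check that the cell indexed by the coordinate of $\boldsymbol{\Delta}$ equal to $-1$ is exactly the cell $(a_1,a_2)$ listed, and the cell indexed by the $+1$ coordinate is exactly $(b_1,b_2)$. Using the coordinate ordering $\boldsymbol{N} = (N_{11}, N_{10}, N_{01}, N_{00})$: for $\boldsymbol{\Delta} = (0,1,0,-1)$ the $-1$ is in position $N_{00}$ so $(a_1,a_2)=(0,0)$ and the $+1$ is in position $N_{10}$ so $(b_1,b_2)=(1,0)$; for $\boldsymbol{\Delta} = (-1,1,0,0)$ the $-1$ is in $N_{11}$ so $(a_1,a_2)=(1,1)$ and the $+1$ is in $N_{10}$ so $(b_1,b_2)=(1,0)$; for $\boldsymbol{\Delta} = (1,0,-1,0)$ the $-1$ is in $N_{01}$ so $(a_1,a_2)=(0,1)$ and the $+1$ is in $N_{11}$ so $(b_1,b_2)=(1,1)$; for $\boldsymbol{\Delta} = (0,0,-1,1)$ the $-1$ is in $N_{01}$ so $(a_1,a_2)=(0,1)$ and the $+1$ is in $N_{00}$ so $(b_1,b_2)=(0,0)$. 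Each line matches the table, which confirms that changing unit $r$'s pair from $(a_1,a_2)$ to $(b_1,b_2)$ decreases $N_{a_1a_2}$ by one, increases $N_{b_1b_2}$ by one, and leaves the other two counts fixed, i.e. produces exactly $\boldsymbol{N}'$.

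There is essentially no hard step here; the only thing to be slightly careful about is the implicit nonemptiness hypothesis. Since $\boldsymbol{N}'$ is assumed to be a legitimate potential table (all coordinates nonnegative) and $\boldsymbol{N}' = \boldsymbol{N} + \boldsymbol{\Delta}$ with the $-1$ of $\boldsymbol{\Delta}$ in the $(a_1,a_2)$ position, we automatically get $N_{a_1a_2} = N'_{a_1a_2} + 1 \geq 1$, so a unit $r$ in the source cell always exists and the construction is well defined. The statement about $\tau$ following the table ($\tau(\boldsymbol{N}') = \tau(\boldsymbol{N}) + 1/n$) is immediate from $\tau(\boldsymbol{N}) = (N_{10}-N_{01})/n$ together with the fact that each $\boldsymbol{\Delta} \in \Tau$ has its second-minus-third coordinate equal to $+1$, and needs no separate proof.
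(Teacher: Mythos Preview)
Your proposal is correct. The paper does not give an explicit proof of this lemma at all---it simply states it and moves on---so your direct case-by-case verification, together with the observation that $N_{a_1a_2}\geq 1$ follows from $\boldsymbol{N}'$ being a legitimate potential table, is exactly the implicit argument the paper takes for granted.
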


\begin{lemma}\label{lemma:fundamental_onemove}
Assume that $\boldsymbol{N}'=\boldsymbol{N}+\boldsymbol{\Delta}$ with $\boldsymbol{\Delta} \in \Tau$, and $\boldsymbol{N}$ and $\boldsymbol{N}'$ differ by only one unit $r$ as constructed in Lemma \ref{lemma::diff1}. Let $\delta$ be the random indicator for the $r$th unit being assigned to treatment. The relationship between $\hat{\tau}'  - \hat{\tau}$ and $\boldsymbol{\Delta}$ is shown in Table \ref{tb::diff-tau}.

\begin{table}[ht]
\caption{Difference between $\hat{\tau}'$ and $\hat{\tau}$}\label{tb::diff-tau}
\begin{center}
\begin{tabular}{|c|c|}
\hline
$\boldsymbol{\Delta}$ & $\hat{\tau}'  - \hat{\tau}$\\
\hline
$(0,1,0,-1)$ & $ \delta / m$ \\
$(-1,1,0,0)$ & $ (1-\delta)/ (n-m)$  \\
$(1,0,-1,0)$ & $ \delta  / m $ \\
$(0,0,-1,1)$ & $ (1-\delta)/ (n-m)$ \\
\hline
\end{tabular}
\end{center}
\end{table}
\end{lemma}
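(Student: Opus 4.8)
The plan is a direct case analysis, organized around the single unit $r$ that distinguishes the two potential-outcome configurations in Lemma~\ref{lemma::diff1}. Since the treatment assignment $\boldsymbol{Z}$ is held fixed while we pass from $\boldsymbol{N}$ to $\boldsymbol{N}' = \boldsymbol{N} + \boldsymbol{\Delta}$, and only unit $r$'s pair of potential outcomes changes, every unit $j \neq r$ contributes the same observed outcome $Y_j = Y_j'$ to both configurations. Hence $\hat{\tau}' - \hat{\tau}$ is entirely determined by the change in unit $r$'s observed outcome, and since $\hat{\tau} = n_{11}/m - n_{01}/(n-m)$, it suffices to track how the two counts $n_{11}$ and $n_{01}$ shift. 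I would first record the elementary fact that unit $r$'s observed outcome equals $a_1$ (resp.\ $b_1$) when $\delta = 1$ and $a_2$ (resp.\ $b_2$) when $\delta = 0$, in the notation of Lemma~\ref{lemma::diff1} and Table~\ref{tb::twotables}.

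Next I would split the four rows of Table~\ref{tb::twotables} into the groups $\boldsymbol{\Delta} \in \Tau \setminus \Tau_0$ and $\boldsymbol{\Delta} \in \Tau_0$. For $\boldsymbol{\Delta} \in \{(0,1,0,-1),(1,0,-1,0)\} = \Tau \setminus \Tau_0$, Table~\ref{tb::twotables} gives $a_2 = b_2$ and $b_1 = a_1 + 1$: unit $r$'s control potential outcome is unchanged while its treatment potential outcome rises from $0$ to $1$. So when $\delta = 0$ the observed outcome of $r$ is unchanged and $\hat{\tau}' - \hat{\tau} = 0$; when $\delta = 1$, unit $r$ enters the $(Z=1,Y=1)$ cell, so $n_{11}$ increases by one with $n_{01}$ unchanged, giving $\hat{\tau}' - \hat{\tau} = 1/m$. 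In both sub-cases $\hat{\tau}' - \hat{\tau} = \delta/m$, matching Table~\ref{tb::diff-tau}. For $\boldsymbol{\Delta} \in \Tau_0 = \{(-1,1,0,0),(0,0,-1,1)\}$, Table~\ref{tb::twotables} gives $a_1 = b_1$ and $b_2 = a_2 - 1$: the treatment potential outcome is unchanged while the control potential outcome drops from $1$ to $0$. So when $\delta = 1$ nothing changes and $\hat{\tau}' - \hat{\tau} = 0$; when $\delta = 0$, unit $r$ leaves the $(Z=0,Y=1)$ cell, so $n_{01}$ decreases by one, giving $\hat{\tau}' - \hat{\tau} = 1/(n-m)$. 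In both sub-cases $\hat{\tau}' - \hat{\tau} = (1-\delta)/(n-m)$, again matching Table~\ref{tb::diff-tau}.

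There is no serious obstacle here; the argument is bookkeeping once one commits to the right organizing principle. The two points that need care are: (i) justifying that the observed outcomes of all units other than $r$ are literally identical across the two configurations, which is immediate from the one-unit-difference construction of Lemma~\ref{lemma::diff1} together with the fact that $\boldsymbol{Z}$ is common to both; and (ii) keeping the sign in $\hat{\tau} = n_{11}/m - n_{01}/(n-m)$ straight, so that a unit entering the $(Z=1,Y=1)$ cell raises $\hat{\tau}$ by $1/m$ whereas a unit leaving the $(Z=0,Y=1)$ cell also raises $\hat{\tau}$, by $1/(n-m)$. Grouping the cases by $\Tau_0$ versus $\Tau \setminus \Tau_0$ is precisely what makes the two right-hand entries of Table~\ref{tb::diff-tau} fall out uniformly, so I would present the proof in that order.
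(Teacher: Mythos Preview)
Your proposal is correct. Both your argument and the paper's are case analyses over the four elements of $\Tau$, but the bookkeeping is organized differently. The paper introduces the auxiliary counts $x_{ik}(\omega)$ (the number of units with potential outcomes $(i,k)$ assigned to treatment under $\boldsymbol{Z}(\omega)$), writes $\hat{\tau}$ and $\hat{\tau}'$ explicitly in terms of these, tabulates how each $x'_{ik}$ differs from $x_{ik}$ by $\pm\delta$, and reads off $\hat{\tau}'-\hat{\tau}$ from the resulting table. You instead work directly with the observed outcome of the single altered unit $r$ and the summary counts $n_{11},n_{01}$ appearing in $\hat{\tau}=n_{11}/m-n_{01}/(n-m)$, which lets you avoid the $x_{ik}$ layer entirely. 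Your grouping of the four cases into $\Tau\setminus\Tau_0$ (treatment potential outcome increases) versus $\Tau_0$ (control potential outcome decreases) is a clean organizing device that the paper does not use here, and it anticipates exactly the distinction exploited downstream in Lemmas~\ref{lemma:order_p2_onemove_unbal} and~\ref{lemma:order_p2_onemove}. The paper's $x_{ik}$ formalism is slightly heavier but more mechanical; your route is shorter and makes the source of each $1/m$ or $1/(n-m)$ contribution more transparent.
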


\begin{proof}[Proof of Lemma \ref{lemma:fundamental_onemove}]
Let $\omega$ be an outcome of the sample space, and treatment assignment $\boldsymbol{Z}(\omega)$ is a function of $\omega.$ For each assignment $\boldsymbol{Z}(\omega)$, let $x_{ik}(\omega)$ denote the number of units in set $\{ j: (y_j(1),y_j(0))=(i,k) \} $ that are assigned treatment, and $x'_{ik}(\omega)$ the number of units in set $\{j: (y'_j(1),y'_j(0))=(i,k) \}$ that are assigned treatment, $i,k=0,1$. Then we have
\begin{align*}
\hat{\tau}(\omega) = & \frac{1}{m}(x_{11}(\omega)+x_{10}(\omega)) - \frac{1}{n-m}\left\{ (N_{11}-x_{11}(\omega)) +(N_{01}-x_{01}(\omega))\right\},\\
\hat{\tau}'(\omega)
=&\frac{1}{m}(x'_{11}(\omega)+x'_{10}(\omega)) - \frac{1}{n-m}\left\{ (N'_{11}-x'_{11}(\omega)) +(N'_{01}-x'_{01}(\omega))\right\}.
\end{align*}

Because $N_{11}', N_{01}', x'_{ik}(\omega) $ are functions of $N_{11}, N_{01}, x_{ik} , \delta(\omega)$, as shown in Table \ref{x_quantity_N_prime}, we can immediately obtain $\hat{\tau}(\omega) - \hat{\tau}'(\omega)$ as shown in Table \ref{tb::diff-tau}.

\begin{table}[htbf]
\caption{Quantities of $\boldsymbol{N}'$ as functions of quantities of $\boldsymbol{N}$}
\label{x_quantity_N_prime}
\centering
\begin{tabular}{|c|c|c|c|c|c|c|}
\hline
$\boldsymbol{\Delta}$ & $x_{11}'(\omega)$&  $x_{10}'(\omega)$& $x'_{01}(\omega)$ &$N'_{11}$&$N'_{01}$ & $\tau(\boldsymbol{N}')$\\
\hline
$(0,1,0,-1)$ & $x_{11}(\omega)$ &  $x_{10}(\omega)+\delta(\omega)$ & $x_{01}(\omega)$ &  $N_{11}$& $N_{01}$ & $\tau(\boldsymbol{N})+1/n$\\
$(-1,1,0,0)$ & $x_{11}(\omega)-\delta(\omega)$ &  $x_{10}(\omega)+\delta(\omega)$ & $x_{01}(\omega)$ &  $N_{11}-1$& $N_{01}$ & $\tau(\boldsymbol{N})+1/n$\\
$(1,0,-1,0)$ & $x_{11}(\omega)+\delta(\omega)$ &  $x_{10}(\omega)$ & $x_{01}(\omega)-\delta(\omega)$ &  $N_{11}+1$& $N_{01}-1$ & $\tau(\boldsymbol{N})+1/n$\\
$(0,0,-1,1)$ & $x_{11}(\omega)$ &  $x_{10}(\omega)$ & $x_{01}(\omega)-\delta(\omega)$ &  $N_{11}$& $N_{01}-1$ & $\tau(\boldsymbol{N})+1/n$\\
\hline
\end{tabular}
\end{table}
\end{proof}

To prove Theorem \ref{thm::order-two-sided}, we need the following lemmas.

\begin{lemma}\label{lemma:order_p2_onemove_unbal}
If $m\leq n/2$, assume $\boldsymbol{N}'=\boldsymbol{N}+\boldsymbol{\Delta}$ with
$
\boldsymbol{\Delta}
\in \Tau_0.
$
If $\tau(\boldsymbol{N}) \leq \hat{\tau}^{\text{obs}}$ and $ \tau(\boldsymbol{N'}) \leq \hat{\tau}^{\text{obs}}$,
then $p_2(\boldsymbol{N}') \geq p_2(\boldsymbol{N})$.
\end{lemma}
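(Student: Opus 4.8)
\textbf{Proof proposal for Lemma \ref{lemma:order_p2_onemove_unbal}.}

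The plan is to compare the two $p$-values $p_2(\boldsymbol{N})$ and $p_2(\boldsymbol{N}')$ event by event on the common probability space, exactly as in the setup of Lemma \ref{lemma:fundamental_onemove}. Since $\boldsymbol{\Delta} \in \Tau_0 = \{(-1,1,0,0), (0,0,-1,1)\}$, Lemma \ref{lemma::diff1} gives potential outcomes for $\boldsymbol{N}$ and $\boldsymbol{N}'$ that differ in only one unit $r$, and Lemma \ref{lemma:fundamental_onemove} tells us $\hat{\tau}' - \hat{\tau} = (1-\delta)/(n-m)$, where $\delta$ is the indicator that unit $r$ is treated. Also $\tau(\boldsymbol{N}') = \tau(\boldsymbol{N}) + 1/n$. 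So on the event $\{\delta = 1\}$ we have $\hat{\tau}' = \hat{\tau}$, while on $\{\delta = 0\}$ we have $\hat{\tau}' = \hat{\tau} + 1/(n-m)$. I would write $\tau = \tau(\boldsymbol{N})$, $\tau' = \tau + 1/n$, and recall $p_2(\boldsymbol{N}) = P(|\hat{\tau} - \tau| \geq c)$, $p_2(\boldsymbol{N}') = P(|\hat{\tau}' - \tau'| \geq c)$, where $c = |\hat{\tau}^{\obs} - \tau^{\obs}|$ — wait, more carefully, $c$ in the first is $|\hat{\tau}^{\obs}-\tau|$ and in the second is $|\hat{\tau}^{\obs}-\tau'|$; since $\tau \leq \tau' \leq \hat{\tau}^{\obs}$, the second threshold is the smaller one, which already helps.

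The core is a coupling/stochastic-dominance argument. First I would establish that, under the assumption $m \le n/2$, the distribution of $\hat\tau'$ is a ``spread-out'' version of that of $\hat\tau$: conditioning on $\delta$, the treatment assignment among the remaining $n-1$ units is a simple random sample, and the conditional law of $\hat\tau$ given $\delta$ equals the conditional law of (a shifted/rescaled hypergeometric), with $\hat\tau' $ obtained by the deterministic shift above. The key inequality to prove is that for every realization, $|\hat\tau' - \tau'| \ge |\hat\tau-\tau|$ fails in general, so a pure pathwise argument will not work; instead I would split on $\delta$. On $\{\delta=1\}$: $\hat\tau'-\tau' = \hat\tau - \tau - 1/n$, and since we are on the side $\hat\tau \le \hat\tau^{\obs}$ region of interest and $\tau \le \hat\tau^{\obs}$, I need $|\hat\tau - \tau - 1/n| \ge |\hat\tau-\tau|$ whenever the event for $\boldsymbol{N}$ occurs — this holds when $\hat\tau - \tau \le 0$ (subtracting pushes further from $0$) but requires care when $\hat\tau-\tau>0$. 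I expect the resolution is that the threshold $c' = \hat\tau^{\obs}-\tau' < \hat\tau^{\obs}-\tau = c$ is smaller, and the symmetric left tail $\hat\tau - \tau \le \tau - \hat\tau^{\obs}$-type events line up so that the event $\{|\hat\tau-\tau|\ge c\}$ (for $\boldsymbol N$) implies $\{|\hat\tau'-\tau'|\ge c'\}$ (for $\boldsymbol N'$) after the shift; combined with the $\delta=0$ branch (where $\hat\tau' - \tau' = \hat\tau - \tau + 1/(n-m) - 1/n = \hat\tau-\tau + (m)/(n(n-m)) > \hat\tau - \tau$, moving right), one gets a one-to-one map from the event for $\boldsymbol N$ into the event for $\boldsymbol N'$ that preserves probability, yielding $p_2(\boldsymbol N') \ge p_2(\boldsymbol N)$.

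Concretely, the steps in order: (i) set up the common space and record $\hat\tau' - \hat\tau$, $\tau'-\tau$, and the two thresholds from Lemmas \ref{lemma::diff1}–\ref{lemma:fundamental_onemove}; (ii) express $p_2(\boldsymbol N)$ and $p_2(\boldsymbol N')$ as probabilities of explicit events in terms of the hypergeometric count $n_{01}$ (or the relevant $x_{ik}$), using that $\hat\tau$ is monotone in its only random component given $\delta$; (iii) condition on $\delta$ and, in each of the two cases, show the event defining $p_2(\boldsymbol N)$ is contained in (a shift of) the event defining $p_2(\boldsymbol N')$, crucially invoking $m \le n/2$ so that the shift $1/(n-m) \le 2/n$ and the downward move by $1/n$ on the $\tau$-side keeps both endpoints inside the feasible range and does not ``skip over'' the acceptance band; (iv) average over $\delta$ to conclude. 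The main obstacle I anticipate is case (iii) on the event $\{\delta=1\}$: ruling out the possibility that subtracting $1/n$ from $\hat\tau - \tau$ moves a point from outside the $c$-band to inside the $c'$-band — i.e. that the event can \emph{shrink}. I expect this is precisely where $m \le n/2$ enters (it controls the granularity $1/(n-m)$ of the support of $\hat\tau'$ relative to the shift $1/n$ in $\tau$), and where one must use that $\tau, \tau' \le \hat\tau^{\obs}$ so that only the right tail $\{\hat\tau - \tau \ge \hat\tau^{\obs}-\tau\}$ and the far left tail matter, both of which behave monotonically under the shift.
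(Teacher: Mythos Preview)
Your setup via Lemmas \ref{lemma::diff1}--\ref{lemma:fundamental_onemove} is exactly right, and you correctly identify that $m\le n/2$ enters through the bound $1/(n-m)\le 2/n$. But the claim that ``a pure pathwise argument will not work'' is mistaken, and this is what leads you into the unfinished $\delta$-splitting and the anxiety about the $\{\delta=1\}$ branch. The paper's proof \emph{is} a one-line pathwise inclusion, obtained by the reformulation you are missing: since $\tau\le\hat\tau^{\obs}$ and $\tau'\le\hat\tau^{\obs}$,
\[
\{|\hat\tau-\tau|\ge \hat\tau^{\obs}-\tau\}=\{\max\{\hat\tau,\,2\tau-\hat\tau\}\ge \hat\tau^{\obs}\},
\qquad
\{|\hat\tau'-\tau'|\ge \hat\tau^{\obs}-\tau'\}=\{\max\{\hat\tau',\,2\tau'-\hat\tau'\}\ge \hat\tau^{\obs}\}.
\]
Now, for every $\omega$ one has $\hat\tau'(\omega)\ge\hat\tau(\omega)$ (since $\hat\tau'-\hat\tau=(1-\delta)/(n-m)\ge 0$) and, because $\boldsymbol{\Delta}\in\Tau_0$ and $m\le n/2$,
\[
2\tau'-\hat\tau'(\omega)=2\tau+\tfrac{2}{n}-\hat\tau'(\omega)\ge 2\tau+\tfrac{2}{n}-\hat\tau(\omega)-\tfrac{1}{n-m}\ge 2\tau-\hat\tau(\omega).
\]
Hence the max for $\boldsymbol N'$ dominates the max for $\boldsymbol N$ at every $\omega$, so the event for $\boldsymbol N$ is contained in the event for $\boldsymbol N'$, and $p_2(\boldsymbol N')\ge p_2(\boldsymbol N)$ follows immediately.

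Your conditioning-on-$\delta$ route can in fact be pushed through (and when you do the bookkeeping you will find the $\{\delta=1\}$ branch is the \emph{easy} one, needing no hypothesis on $m$, while $\{\delta=0\}$ is where $1/(n-m)\le 2/n$ is used), but as written it is only a sketch with the crucial case left as ``I expect the resolution is\ldots''. The $\max$ rewriting above removes the case analysis entirely and is what you should use.
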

\begin{proof}[Proof of Lemma \ref{lemma:order_p2_onemove_unbal}]
We consider two potential tables $\boldsymbol{N}$ and $\boldsymbol{N}'$ constructed in Lemma \ref{lemma:fundamental_onemove}. For any treatment assignment $\boldsymbol{Z}(\omega)$ and $\boldsymbol{\Delta} \in \Tau $, we have $\hat{\tau}(\omega) \leq \hat{\tau}'(\omega)\leq \hat{\tau}(\omega)+1/(n-m) \leq \hat{\tau}(\omega)+2/n$, and
$$
2\tau' - \hat{\tau}'(\omega) = 2\tau+\frac{2}{n}- \hat{\tau}'(\omega) \geq 2\tau+ \frac{2}{n}- \hat{\tau}(\omega)-\frac{2}{n} = 2\tau- \hat{\tau}(\omega).
$$
If $\tau \leq \hat{\tau}^{\text{obs}}$ and $ \tau' \leq \hat{\tau}^{\text{obs}}$, then
\begin{align*}
p_2(\boldsymbol{N}') = & P_{\boldsymbol{N}'}(|\hat{\tau}'-\tau'|\geq \hat{\tau}^{\text{obs}}-\tau') = P_{\boldsymbol{N}'}(\max\{\hat{\tau}',2\tau'-\hat{\tau}'\} \geq \hat{\tau}^{\text{obs}})\\
\geq & P_{\boldsymbol{N}}(\max\{\hat{\tau},2\tau-\hat{\tau}\} \geq \hat{\tau}^{\text{obs}})
= P_{\boldsymbol{N}}(|\hat{\tau}-\tau|\geq \hat{\tau}^{\text{obs}}-\tau) =p_2(\boldsymbol{N}).
\end{align*}
Here we change the probability measures from $P_{\boldsymbol{N}'}$ to $P_{\boldsymbol{N}}$ because of the coupling in Lemma \ref{lemma::diff1} and
Lemma \ref{lemma:fundamental_onemove}, i.e.,
\begin{align*}
P_{\boldsymbol{N}'}(\hat{\tau}' \geq & \hat{\tau}^{\text{obs}}) = \binom{n}{m}^{-1}\sum_{\omega} 1\{\max\{\hat{\tau}'(\omega),2\tau'-\hat{\tau}'(\omega)\} \geq \hat{\tau}^{\text{obs}}\} \\
\geq & \binom{n}{m}^{-1}\sum_{\omega} 1\{\max\{\hat{\tau}(\omega),2\tau-\hat{\tau}(\omega)\} \geq \hat{\tau}^{\text{obs}} \} = P_{\boldsymbol{N}}(\hat{\tau} \geq \hat{\tau}^{\text{obs}}).
\end{align*}
\end{proof}

\begin{lemma}\label{lemma:order_p2_unbal}
If $m\leq n/2$, assume $\boldsymbol{N}$ and $\boldsymbol{N}'$ are two potential tables. If $\tau(\boldsymbol{N}) \leq \hat{\tau}^{\text{obs}}, \tau(\boldsymbol{N'}) \leq \hat{\tau}^{\text{obs}}$, and
$
N_{11}'=N_{11}, N'_{10} = N_{10}, N'_{01} \leq N_{01} ,
$
then $p_2(\boldsymbol{N}') \geq p_2(\boldsymbol{N})$.

\end{lemma}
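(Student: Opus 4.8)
The plan is to obtain Lemma~\ref{lemma:order_p2_unbal} by iterating the single-move comparison in Lemma~\ref{lemma:order_p2_onemove_unbal} along a suitable chain of potential tables. Since $N_{11}' = N_{11}$ and $N_{10}' = N_{10}$, and the four entries of a potential table of size $n$ sum to $n$, we necessarily have $N_{00}' = N_{00} + (N_{01} - N_{01}')$; writing $d = N_{01} - N_{01}' \geq 0$, this means $\boldsymbol{N}' = \boldsymbol{N} + d\,(0,0,-1,1)$, that is, $\boldsymbol{N}'$ is reached from $\boldsymbol{N}$ by $d$ successive moves in the direction $(0,0,-1,1) \in \Tau_0$.

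First I would set up the chain $\boldsymbol{N}^{(t)} = \boldsymbol{N} + t\,(0,0,-1,1)$ for $t = 0,1,\dots,d$, so that $\boldsymbol{N}^{(0)} = \boldsymbol{N}$ and $\boldsymbol{N}^{(d)} = \boldsymbol{N}'$. Each $\boldsymbol{N}^{(t)}$ is a bona fide potential table: its first two coordinates are $N_{11}, N_{10} \geq 0$, its third coordinate $N_{01} - t$ lies between the nonnegative numbers $N_{01}'$ and $N_{01}$, and its fourth coordinate $N_{00} + t$ lies between the nonnegative numbers $N_{00}$ and $N_{00}'$; hence all entries are nonnegative and $p_2(\boldsymbol{N}^{(t)})$ is well defined (Lemma~\ref{lemma:order_p2_onemove_unbal} does not require compatibility with the observed table). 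By the remark preceding Lemma~\ref{lemma::diff1}, the move $(0,0,-1,1)$ increases $\tau$ by $1/n$, so $\tau(\boldsymbol{N}^{(t)}) = \tau(\boldsymbol{N}) + t/n$ is nondecreasing in $t$; since $\tau(\boldsymbol{N}^{(d)}) = \tau(\boldsymbol{N}') \leq \hat{\tau}^{\obs}$ by hypothesis, this forces $\tau(\boldsymbol{N}^{(t)}) \leq \hat{\tau}^{\obs}$ for every $0 \leq t \leq d$.

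With these facts in hand, for each $t = 0,\dots,d-1$ the tables $\boldsymbol{N}^{(t)}$ and $\boldsymbol{N}^{(t+1)} = \boldsymbol{N}^{(t)} + (0,0,-1,1)$ differ by $\boldsymbol{\Delta} = (0,0,-1,1) \in \Tau_0$, and both satisfy $\tau \leq \hat{\tau}^{\obs}$, so Lemma~\ref{lemma:order_p2_onemove_unbal} (which is where the assumption $m \leq n/2$ is invoked) yields $p_2(\boldsymbol{N}^{(t+1)}) \geq p_2(\boldsymbol{N}^{(t)})$. Telescoping over $t$ gives $p_2(\boldsymbol{N}') = p_2(\boldsymbol{N}^{(d)}) \geq p_2(\boldsymbol{N}^{(0)}) = p_2(\boldsymbol{N})$, which is the assertion. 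The argument is mostly bookkeeping once Lemma~\ref{lemma:order_p2_onemove_unbal} is available; the one point that needs care — and the main, though modest, obstacle — is verifying that each link of the chain is admissible for that lemma, in particular that the two-sided constraint $\tau \leq \hat{\tau}^{\obs}$ is preserved all along the chain. This is exactly what the monotonicity of $\tau$ in the direction $(0,0,-1,1)$ provides, since it lets the constraint at the endpoint $\boldsymbol{N}'$ propagate backward to every intermediate table.
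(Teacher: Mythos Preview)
Your proof is correct and follows the same approach as the paper: both reach $\boldsymbol{N}'$ from $\boldsymbol{N}$ by $d = N_{01}-N_{01}'$ successive moves in the direction $(0,0,-1,1)\in\Tau_0$ and apply Lemma~\ref{lemma:order_p2_onemove_unbal} at each step. Your version is in fact more careful, since you explicitly verify that every intermediate table $\boldsymbol{N}^{(t)}$ has nonnegative entries and satisfies $\tau(\boldsymbol{N}^{(t)})\leq \hat{\tau}^{\obs}$---a hypothesis needed to invoke Lemma~\ref{lemma:order_p2_onemove_unbal} that the paper's terse proof leaves implicit.
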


\begin{proof}[Proof of Lemma \ref{lemma:order_p2_unbal}]
Because $\tau(\boldsymbol{N}) \leq \hat{\tau}^{\text{obs}}, \tau(\boldsymbol{N'}) \leq \hat{\tau}^{\text{obs}}$, and
\begin{align*}
 \boldsymbol{N}+(N_{01}-N_{01}') \cdot (0,0,-1,1) 
= (N'_{11}, N'_{10}, N'_{01}, N_{00}') = \boldsymbol{N}',
\end{align*}
we know $p_2(\boldsymbol{N}') \geq p_2(\boldsymbol{N})$ according to Lemma \ref{lemma:order_p2_onemove_unbal}.
\end{proof}

\begin{lemma}\label{lemma:order_p2_onemove}
In balanced experiments, assume $\boldsymbol{N}'=\boldsymbol{N}+\boldsymbol{\Delta}$ with
$
\boldsymbol{\Delta}
\in \Tau.
$
If $\tau(\boldsymbol{N}) \leq \hat{\tau}^{\text{obs}}$ and $ \tau(\boldsymbol{N'}) \leq \hat{\tau}^{\text{obs}}$,
then $p_2(\boldsymbol{N}') \geq p_2(\boldsymbol{N})$.
\end{lemma}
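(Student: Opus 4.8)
The plan is to reduce everything to the coupling argument already carried out for Lemma~\ref{lemma:order_p2_onemove_unbal} and to isolate the single place where balance is used. Since a balanced experiment has $m=n/2\leq n/2$, Lemma~\ref{lemma:order_p2_onemove_unbal} already disposes of the case $\boldsymbol{\Delta}\in\Tau_0$, so in principle only $\boldsymbol{\Delta}\in\Tau\setminus\Tau_0=\{(0,1,0,-1),(1,0,-1,0)\}$ remains; but in fact the cleanest route is to re-run the argument of Lemma~\ref{lemma:order_p2_onemove_unbal} verbatim, noting that the only inequality in that proof which used $\boldsymbol{\Delta}\in\Tau_0$ now holds for all of $\Tau$ precisely because $m=n/2$.

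First I would invoke Lemmas~\ref{lemma::diff1} and~\ref{lemma:fundamental_onemove} to couple $\boldsymbol{N}$ and $\boldsymbol{N}'$: construct potential outcomes that differ in a single unit $r$, let $\delta=\delta(\omega)$ be the indicator that $r$ is assigned to treatment under $\boldsymbol{Z}(\omega)$, and read off from Table~\ref{tb::diff-tau} that $\hat{\tau}'(\omega)-\hat{\tau}(\omega)$ equals $\delta/m$ when $\boldsymbol{\Delta}\in\{(0,1,0,-1),(1,0,-1,0)\}$ and $(1-\delta)/(n-m)$ when $\boldsymbol{\Delta}\in\Tau_0$. The key point, and the only use of balance, is that with $m=n/2$ one has $1/m=1/(n-m)=2/n$, so in every case
\begin{align*}
\hat{\tau}(\omega)\leq \hat{\tau}'(\omega)\leq \hat{\tau}(\omega)+\tfrac{2}{n}.
\end{align*}
Combining this with $\tau'=\tau+1/n$ (valid for all $\boldsymbol{\Delta}\in\Tau$) gives $2\tau'-\hat{\tau}'(\omega)=2\tau+\tfrac{2}{n}-\hat{\tau}'(\omega)\geq 2\tau-\hat{\tau}(\omega)$ for every $\omega$.

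Then I would mirror the end of the proof of Lemma~\ref{lemma:order_p2_onemove_unbal}. Using $\tau\leq\hat{\tau}^{\text{obs}}$ and $\tau'\leq\hat{\tau}^{\text{obs}}$, rewrite the two $p$-values as
\begin{align*}
p_2(\boldsymbol{N}')=P_{\boldsymbol{N}'}\!\left(\max\{\hat{\tau}',\,2\tau'-\hat{\tau}'\}\geq\hat{\tau}^{\text{obs}}\right),\qquad
p_2(\boldsymbol{N})=P_{\boldsymbol{N}}\!\left(\max\{\hat{\tau},\,2\tau-\hat{\tau}\}\geq\hat{\tau}^{\text{obs}}\right),
\end{align*}
expand each as $\binom{n}{m}^{-1}\sum_{\omega}1\{\cdot\}$ over the common sample space of assignments, and compare the indicators term by term: from $\hat{\tau}'(\omega)\geq\hat{\tau}(\omega)$ and $2\tau'-\hat{\tau}'(\omega)\geq 2\tau-\hat{\tau}(\omega)$, whenever the $\boldsymbol{N}$-indicator equals $1$ so does the $\boldsymbol{N}'$-indicator; summing yields $p_2(\boldsymbol{N}')\geq p_2(\boldsymbol{N})$.

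I do not expect a genuine obstacle: the mathematical content is just the remark that a balanced design makes all four moves in $\Tau$ shift $\hat{\tau}$ by at most $2/n$, which is exactly what fails when $m<n/2$ for the two moves outside $\Tau_0$ (there the jump $1/m$ can exceed $2/n$), and this is why the lemma is stated for balanced experiments only. The one thing to be careful about is the measure change: one must check that the coupling of Lemmas~\ref{lemma::diff1}--\ref{lemma:fundamental_onemove} really places the two randomization distributions on the same probability space with identical weights $\binom{n}{m}^{-1}$ — the bijection being the identity on the assignment vector $\boldsymbol{Z}(\omega)$, which both potential-outcome arrays share — so that the term-by-term comparison of indicators is legitimate and not merely a comparison of two unrelated sums.
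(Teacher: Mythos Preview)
Your proposal is correct and follows essentially the same argument as the paper's proof: couple via Lemmas~\ref{lemma::diff1}--\ref{lemma:fundamental_onemove}, bound $\hat{\tau}'(\omega)-\hat{\tau}(\omega)$ above by $1/m=2/n$ (using balance), deduce $2\tau'-\hat{\tau}'(\omega)\geq 2\tau-\hat{\tau}(\omega)$, and compare indicators over the common sample space. The paper writes the bound as $1/m$ and uses $2/n-1/m=0$ implicitly, while you spell out $1/m=1/(n-m)=2/n$; the content is identical.
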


\begin{proof}[Proof of Lemma \ref{lemma:order_p2_onemove}]
We consider two potential tables $\boldsymbol{N}$ and $\boldsymbol{N}'$ constructed in Lemma \ref{lemma:fundamental_onemove}. For any treatment assignment $\boldsymbol{Z}(\omega)$ and $\boldsymbol{\Delta} \in \Tau $, we have $\hat{\tau}(\omega) \leq \hat{\tau}'(\omega)\leq \hat{\tau}(\omega)+1/m$, and
$$
2\tau' - \hat{\tau}'(\omega) = 2\tau+\frac{2}{n}- \hat{\tau}'(\omega) \geq 2\tau+ \frac{2}{n}- \hat{\tau}(\omega)-\frac{1}{m} = 2\tau- \hat{\tau}(\omega).
$$
If $\tau \leq \hat{\tau}^{\text{obs}}$ and $ \tau' \leq \hat{\tau}^{\text{obs}}$, then
\begin{align*}
p_2(\boldsymbol{N}') = & P_{\boldsymbol{N}'}(|\hat{\tau}'-\tau'|\geq \hat{\tau}^{\text{obs}}-\tau') = P_{\boldsymbol{N}'}(\max\{\hat{\tau}',2\tau'-\hat{\tau}'\} \geq \hat{\tau}^{\text{obs}})\\
\geq & P_{\boldsymbol{N}}(\max\{\hat{\tau},2\tau-\hat{\tau}\} \geq \hat{\tau}^{\text{obs}})
= P_{\boldsymbol{N}}(|\hat{\tau}-\tau|\geq \hat{\tau}^{\text{obs}}-\tau) =p_2(\boldsymbol{N}).
\end{align*}
\end{proof}

\begin{lemma}\label{lemma:order_p2}
In balanced experiments, assume $\boldsymbol{N}$ and $\boldsymbol{N}'$ are two potential tables. If $\tau(\boldsymbol{N}) \leq \hat{\tau}^{\text{obs}}, \tau(\boldsymbol{N'}) \leq \hat{\tau}^{\text{obs}}$, and
$
N_{11}'=N_{11}, N'_{10} \geq N_{10}, N'_{01} \leq N_{01} ,
$
then $p_2(\boldsymbol{N}') \geq p_2(\boldsymbol{N})$.
\end{lemma}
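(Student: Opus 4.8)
The plan is to pass from $\boldsymbol{N}$ to $\boldsymbol{N}'$ by a chain of single-unit moves, each lying in $\Tau$, and then apply Lemma~\ref{lemma:order_p2_onemove} along the chain. Write $a = N_{10}' - N_{10} \ge 0$ and $b = N_{01} - N_{01}' \ge 0$. Since $N_{11}' = N_{11}$ and the entries of each table sum to $n$, the difference is forced:
\[ \boldsymbol{N}' - \boldsymbol{N} = a\,(0,1,0,-1) + b\,(0,0,-1,1), \]
a nonnegative integer combination of two vectors in $\Tau$; in particular $\tau(\boldsymbol{N}') = \tau(\boldsymbol{N}) + (a+b)/n$, which is consistent with the hypotheses $\tau(\boldsymbol{N}),\tau(\boldsymbol{N}') \le \hat{\tau}^{\obs}$.

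I would then order these $a+b$ moves as follows: first apply $(0,0,-1,1)$ exactly $b$ times, then apply $(0,1,0,-1)$ exactly $a$ times, producing tables $\boldsymbol{N} = \boldsymbol{N}^{(0)}, \boldsymbol{N}^{(1)}, \dots, \boldsymbol{N}^{(a+b)} = \boldsymbol{N}'$, each consecutive pair differing by an element of $\Tau$. Two facts need checking. First, every $\boldsymbol{N}^{(i)}$ is a legitimate potential table (nonnegative entries): during the first $b$ steps the $N_{01}$-entry decreases monotonically from $N_{01}$ to $N_{01}' \ge 0$ while the $N_{00}$-entry only increases, and during the last $a$ steps the $N_{10}$-entry only increases while the $N_{00}$-entry decreases monotonically from $N_{00}+b$ down to $N_{00}' \ge 0$; so all four entries stay nonnegative throughout. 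Second, $\tau(\boldsymbol{N}^{(i)}) \le \hat{\tau}^{\obs}$ for every $i$: each move adds exactly $1/n$ to $\tau$, so $i \mapsto \tau(\boldsymbol{N}^{(i)})$ is increasing and is therefore bounded above by its terminal value $\tau(\boldsymbol{N}^{(a+b)}) = \tau(\boldsymbol{N}') \le \hat{\tau}^{\obs}$.

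With these in hand, Lemma~\ref{lemma:order_p2_onemove} applies to each consecutive pair $(\boldsymbol{N}^{(i)}, \boldsymbol{N}^{(i+1)})$ — the hypotheses $\boldsymbol{N}^{(i+1)} = \boldsymbol{N}^{(i)} + \boldsymbol{\Delta}$ with $\boldsymbol{\Delta}\in\Tau$ and $\tau(\boldsymbol{N}^{(i)}), \tau(\boldsymbol{N}^{(i+1)}) \le \hat{\tau}^{\obs}$ hold by the previous paragraph — yielding $p_2(\boldsymbol{N}^{(i+1)}) \ge p_2(\boldsymbol{N}^{(i)})$. Chaining these inequalities for $i = 0, \dots, a+b-1$ gives $p_2(\boldsymbol{N}') \ge p_2(\boldsymbol{N})$. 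This mirrors the one-move-at-a-time argument already used for Lemma~\ref{lemma:order_p2_unbal}, now invoking the balanced-experiment single-move bound so that moves outside $\Tau_0$ are permitted.

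The only delicate point — the main obstacle — is choosing the order of the $a+b$ moves so that the chain never leaves the set of valid potential tables while simultaneously keeping $\tau \le \hat{\tau}^{\obs}$ at every step. The $\tau$-constraint is automatic from monotonicity of $\tau$ along the chain; performing all $N_{00}$-increasing moves before the $N_{00}$-decreasing ones is what secures nonnegativity of the intermediate tables. The component inequalities in the hypothesis ($N_{11}$ fixed, $N_{10}$ up, $N_{01}$ down) are precisely what forces the difference to decompose into these two $\tau$-increasing moves, so no $\tau$-decreasing move is ever needed.
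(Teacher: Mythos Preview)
Your proof is correct and follows essentially the same route as the paper: decompose $\boldsymbol{N}'-\boldsymbol{N}$ as a nonnegative integer combination of $(0,0,-1,1)$ and $(0,1,0,-1)$ and apply Lemma~\ref{lemma:order_p2_onemove} along the resulting chain. You are more explicit than the paper in ordering the moves and verifying that every intermediate table has nonnegative entries and satisfies $\tau\le\hat{\tau}^{\obs}$, which the paper leaves implicit.
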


\begin{proof}[Proof of Lemma \ref{lemma:order_p2}]
Because $\tau(\boldsymbol{N}) \leq \hat{\tau}^{\text{obs}}, \tau(\boldsymbol{N'}) \leq \hat{\tau}^{\text{obs}}$, and
\begin{align*}
 \boldsymbol{N}+(N_{01}-N_{01}') \cdot (0,0,-1,1) + (N_{10}'-N_{10})\cdot (0,1,0,-1) 
= (N'_{11}, N'_{10}, N'_{01}, N_{00}') = \boldsymbol{N}',
\end{align*}
we know $p_2(\boldsymbol{N}') \geq p_2(\boldsymbol{N})$ according to Lemma \ref{lemma:order_p2_onemove}.
\end{proof}

\begin{proof}[{\bf Proof of Theorem \ref{thm::order-two-sided}}]
We first prove (1). We show $\underline{N}_{10}(N_{11}, N_{01}) \leq \underline{N}_{10}(N_{11}, N_{01}')$, for any $0 \leq N_{11} \leq n$ and $ 0   \leq N_{01}\leq N_{01}' \leq n-N_{11}$. If $\underline{N}_{10}(N_{11}, N_{01}') \geq \lfloor N_{01} + n \hat{\tau}^{\obs} \rfloor+1$, then the conclusion holds trivially. If $\underline{N}_{10}(N_{11}, N_{01}') \leq \lfloor N_{01} + n \hat{\tau}^{\obs} \rfloor  \leq \lfloor N_{01}' + n \hat{\tau}^{\obs} \rfloor$, then
$$
\boldsymbol{N}_1 = (N_{11}, \underline{N}_{10}(N_{11}, N_{01}'), N_{01}', n - N_{11} -\underline{N}_{10}(N_{11}, N_{01}')-N_{01}' )
$$ 
is a potential table satisfying $\tau(\boldsymbol{N}_1) \leq \hat{\tau}^{\text{obs}}$ and $p_2(\boldsymbol{N}_1)\geq \alpha$. Because 
$$
N_{11} + \underline{N}_{10}(N_{11}, N_{01}') + N_{01} \leq N_{11} + \underline{N}_{10}(N_{11}, N_{01}') + N_{01}' \leq n,
$$
we know that
$$
\boldsymbol{N}_2=(N_{11},  \underline{N}_{10}(N_{11}, N_{01}'), N_{01}, n- N_{11} - \underline{N}_{10}(N_{11}, N_{01}') - N_{01})
$$
is a potential table which
satisfies
$$
\tau(\boldsymbol{N}_2) = (\underline{N}_{10}(N_{11}, N_{01}') - N_{01})/n \leq  (\lfloor N_{01} + n \hat{\tau}^{\obs} \rfloor  -N_{01})/n \leq \hat{\tau}^{\obs}.
$$ 
According to Lemma \ref{lemma:order_p2_unbal}, $p_2(\boldsymbol{N}_2)\geq p_2(\boldsymbol{N}_1) \geq \alpha$, implying
$
\underline{N}_{10}(N_{11}, N_{01}) \leq \underline{N}_{10}(N_{11}, N_{01}').
$

We then prove (2). 
For a potential table $\boldsymbol{N}$ with $ \tau(\boldsymbol{N}) \leq \hat{\tau}^{\text{\obs}}$, if $N_{10}<\underline{N}_{10}(N_{11},N_{01})$, then $p_1(\boldsymbol{N})<\alpha$ from the definition of $\underline{N}_{10}(N_{11},N_{01})$. 
Otherwise, $\underline{N}_{10}(N_{11},N_{01}) \leq N_{10} \leq \lfloor N_{01}+n\hat{\tau}^\obs \rfloor$, which implies that
$$
\boldsymbol{N}_3 = (N_{11}, \underline{N}_{10}(N_{11}, N_{01}), N_{01}, n-N_{11}-\underline{N}_{10}(N_{11}, N_{01})-N_{01})
$$
is a potential table satisfying $\tau(\boldsymbol{N}_3 ) \leq \hat{\tau}^{\text{obs}}$ and $p_2(\boldsymbol{N}_3) \geq \alpha$. According to Lemma \ref{lemma:order_p2}, $p_2(\boldsymbol{N}) \geq p_2(\boldsymbol{N}_3) \geq \alpha$.

\end{proof}

To prove Theorem \ref{thm::order-one-sided}, we need the following lemmas.

\begin{lemma}\label{lemma:order_p1_onemove}
If $\boldsymbol{N}'=\boldsymbol{N}+\boldsymbol{\Delta}$ with
$
\boldsymbol{\Delta}
\in \Tau,
$
then $p_1(\boldsymbol{N}') \geq p_1(\boldsymbol{N})$.
\end{lemma}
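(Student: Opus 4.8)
The plan is to reuse the single-unit coupling between $\boldsymbol{N}$ and $\boldsymbol{N}'$ supplied by Lemma~\ref{lemma::diff1} and Lemma~\ref{lemma:fundamental_onemove}. Fix any $\boldsymbol{\Delta}\in\Tau$ and take the potential outcomes $\{y_j(1),y_j(0)\}$ and $\{y_j'(1),y_j'(0)\}$ that differ only at the unit $r$ as in Lemma~\ref{lemma::diff1}. Then $\hat{\tau}$ and $\hat{\tau}'$ live on the common sample space indexed by treatment assignments $\omega$, with $\delta(\omega)$ the indicator that unit $r$ is treated, and the randomization distribution assigns mass $\binom{n}{m}^{-1}$ to each $\omega$ under both $P_{\boldsymbol{N}}$ and $P_{\boldsymbol{N}'}$. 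The crucial observation is that $\hat{\tau}^{\obs}$ is the fixed realized value from the observed data and does not depend on which table we consider.

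The key step is the pointwise monotonicity $\hat{\tau}'(\omega)\geq\hat{\tau}(\omega)$ for every $\omega$. This is immediate from Table~\ref{tb::diff-tau}: for each of the four $\boldsymbol{\Delta}\in\Tau$, the difference $\hat{\tau}'(\omega)-\hat{\tau}(\omega)$ equals either $\delta(\omega)/m$ or $(1-\delta(\omega))/(n-m)$, both of which are nonnegative. Consequently the event $\{\hat{\tau}(\omega)\geq\hat{\tau}^{\obs}\}$ is contained in the event $\{\hat{\tau}'(\omega)\geq\hat{\tau}^{\obs}\}$.

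Counting assignments then yields the claim:
\begin{align*}
p_1(\boldsymbol{N}') = \binom{n}{m}^{-1}\sum_{\omega} 1\{\hat{\tau}'(\omega)\geq\hat{\tau}^{\obs}\}
\geq \binom{n}{m}^{-1}\sum_{\omega} 1\{\hat{\tau}(\omega)\geq\hat{\tau}^{\obs}\} = p_1(\boldsymbol{N}).
\end{align*}
There is no real obstacle here beyond invoking the coupling correctly; the one-sided statistic $\hat{\tau}$ is genuinely easier to handle than $|\hat{\tau}-\tau|$, since no truncation at $\hat{\tau}^{\obs}$ via $\max\{\hat{\tau},2\tau-\hat{\tau}\}$ and no restriction of the form $\tau\leq\hat{\tau}^{\obs}$ is needed — the argument works verbatim for all of $\Tau$, not just $\Tau_0$. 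The only point requiring care is that the change of probability measure from $P_{\boldsymbol{N}'}$ to $P_{\boldsymbol{N}}$ is legitimate precisely because the coupling in Lemmas~\ref{lemma::diff1} and~\ref{lemma:fundamental_onemove} keeps the assignment space and its uniform measure unchanged while matching up the two estimators assignment by assignment.
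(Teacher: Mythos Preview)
Your proof is correct and follows essentially the same approach as the paper: invoke the single-unit coupling from Lemmas~\ref{lemma::diff1} and~\ref{lemma:fundamental_onemove}, read off the pointwise inequality $\hat{\tau}'(\omega)\geq\hat{\tau}(\omega)$ from Table~\ref{tb::diff-tau}, and conclude by inclusion of events. The paper's version is terser, but your added remarks on why the change of measure is legitimate and why no side condition like $\tau\leq\hat{\tau}^{\obs}$ is needed are accurate and helpful.
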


\begin{proof}[Proof of Lemma \ref{lemma:order_p1_onemove}]
We consider the potential tables constructed in Lemma \ref{lemma::diff1}. For any treatment assignment $\boldsymbol{Z}(\omega)$, we have $\hat{\tau}'(\omega)\geq \hat{\tau}(\omega)$, and therefore $p_1(\boldsymbol{N}') = P_{\boldsymbol{N}'}(\hat{\tau}' \geq \hat{\tau}^{\text{obs}}) \geq P_{\boldsymbol{N}}(\hat{\tau} \geq \hat{\tau}^{\text{obs}}) = p_1(\boldsymbol{N})$.
\end{proof}

\begin{lemma}\label{lemma:order_p1}
If $\boldsymbol{N}'$ and $\boldsymbol{N}$ satisfy
$
N_{11}'=N_{11}, N'_{10} \geq N_{10}
$
and 
$
N'_{01} \leq N_{01},
$
then $p_1(\boldsymbol{N}') \geq p_1(\boldsymbol{N})$.
\end{lemma}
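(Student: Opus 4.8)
The plan is to reduce Lemma \ref{lemma:order_p1} to the one-step monotonicity already established in Lemma \ref{lemma:order_p1_onemove}, exactly as Lemma \ref{lemma:order_p2} was deduced from Lemma \ref{lemma:order_p2_onemove}, but now with no side condition relating $\tau$ to $\hat{\tau}^{\obs}$. The key point is that the displacement from $\boldsymbol{N}$ to $\boldsymbol{N}'$ is a sum of non-negative integer multiples of two vectors in $\Tau$, so $\boldsymbol{N}'$ can be reached from $\boldsymbol{N}$ through a finite chain of potential tables, each obtained from the previous one by adding a single $\boldsymbol{\Delta}\in\Tau$.

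First I would record the decomposition. Since $N_{11}'=N_{11}$, $N_{10}'\geq N_{10}$, $N_{01}'\leq N_{01}$, and the entries of both $\boldsymbol{N}$ and $\boldsymbol{N}'$ sum to $n$, one checks directly that
$$\boldsymbol{N}' = \boldsymbol{N} + (N_{01}-N_{01}')\,(0,0,-1,1) + (N_{10}'-N_{10})\,(0,1,0,-1),$$
where both coefficients are non-negative integers; the last coordinate matches because $N_{00}+(N_{01}-N_{01}')-(N_{10}'-N_{10}) = n-N_{11}-N_{10}'-N_{01}' = N_{00}'$. Next I would check that the intermediate tables along this path are genuine potential tables: applying $(0,0,-1,1)$ a total of $j$ times, $0\leq j\leq N_{01}-N_{01}'$, lowers the third coordinate from $N_{01}$ toward $N_{01}'\geq 0$ and raises the fourth, so all entries stay non-negative; then applying $(0,1,0,-1)$ a total of $k$ times, $0\leq k\leq N_{10}'-N_{10}$, raises $N_{10}$ and lowers the fourth coordinate, whose minimum over the chain is attained at the endpoint and equals $N_{00}'\geq 0$. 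Hence every table in the chain has non-negative integer entries summing to $n$, and Lemma \ref{lemma:order_p1_onemove} applies at each single step, giving $p_1(\boldsymbol{N}')\geq p_1(\boldsymbol{N})$.

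The content here is essentially bookkeeping, so there is no serious obstacle; the one thing to be careful about is the order of the elementary moves. If one instead tried to increase $N_{10}$ before decreasing $N_{01}$, the fourth coordinate could dip below zero at an intermediate step, so the chain must add $(0,0,-1,1)$ first (replenishing $N_{00}$) and $(0,1,0,-1)$ afterward. This is precisely the role played by the hypotheses $N_{10}'\geq N_{10}$ and $N_{01}'\leq N_{01}$, and the argument is a simplified version of the one used for Lemma \ref{lemma:order_p2}, with the $\tau(\boldsymbol{N})\leq\hat{\tau}^{\obs}$ constraints dropped.
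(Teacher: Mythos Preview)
Your proposal is correct and follows the same approach as the paper: decompose $\boldsymbol{N}'-\boldsymbol{N}$ as $(N_{01}-N_{01}')\,(0,0,-1,1)+(N_{10}'-N_{10})\,(0,1,0,-1)$ and apply Lemma~\ref{lemma:order_p1_onemove} repeatedly. In fact you are more careful than the paper, which simply asserts the decomposition and invokes Lemma~\ref{lemma:order_p1_onemove} without checking that the intermediate tables remain valid or that the order of the two types of moves matters; your observation that the $(0,0,-1,1)$ moves should precede the $(0,1,0,-1)$ moves to keep $N_{00}$ non-negative along the chain fills a small gap the paper leaves implicit.
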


\begin{proof}[Proof of Lemma \ref{lemma:order_p1}]
Because
\begin{align*}
 \boldsymbol{N}+(N_{01}-N_{01}') \cdot (0,0,-1,1) + (N_{10}'-N_{10})\cdot (0,1,0,-1)
=  (N'_{11}, N'_{10}, N'_{01}, N_{00}') = \boldsymbol{N}',
\end{align*} 
we have $p_1(\boldsymbol{N}') \geq p_1(\boldsymbol{N})$ by repeatedly applying Lemma \ref{lemma:order_p1_onemove}.
\end{proof}

\begin{proof}[{\bf Proof of Theorem \ref{thm::order-one-sided}}]
(1) If $\underline{N}_{10}(N_{11},N'_{01}) = n+1$, the conclusion holds trivially. If $\underline{N}_{10}(N_{11},N'_{01}) \leq n$, then
$$
\boldsymbol{N}_1= (N_{11}, \underline{N}_{10}(N_{11},N'_{01}), N_{01}', n- N_{11}-  \underline{N}_{10}(N_{11},N'_{01})- N_{01}')
$$
is a  potential table and satisfies $p_1(\boldsymbol{N}_1)\geq \alpha$. Because $N_{01}\leq N_{01}'$, 
$$
\boldsymbol{N}_2=(N_{11}, \underline{N}_{10}(N_{11},N'_{01}), N_{01}, n-N_{11}- \underline{N}_{10}(N_{11},N'_{01})-N_{01})
$$
is also a potential table. Lemma \ref{lemma:order_p1} further implies $
p_1(\boldsymbol{N}_2) \geq p_1(\boldsymbol{N}_1) \geq \alpha.
$
Therefore $\underline{N}_{10}(N_{11},N_{01}) \leq \underline{N}_{10}(N_{11},N'_{01})$.

(2) For a potential table $\boldsymbol{N}$, if $N_{10}<\underline{N}_{10}(N_{11},N_{01})$, then $p_1(\boldsymbol{N})<\alpha$ by the definition of $\underline{N}_{10}(N_{11},N_{01})$. Otherwise, $\underline{N}_{10}(N_{11},N_{01}) \leq N_{10} <n+1$, implying that
$$
\boldsymbol{N}_3 = (N_{11}, \underline{N}_{10}(N_{11}, N_{01}), N_{01}, n-N_{11}-\underline{N}_{10}(N_{11}, N_{01})-N_{01})
$$
is a potential table and satisfies $p_1(\boldsymbol{N}_3) \geq \alpha$. According to Lemma \ref{lemma:order_p1},  
$
p_1(\boldsymbol{N}) \geq p_1(\boldsymbol{N}_3) \geq \alpha.
$

\end{proof}

\section{More computational details}
\label{sec:more_computation_detail}

\begin{theorem}
\label{thm::computation-two-sided}
The procedure for the two-sided confidence interval in Section \ref{subsec:two_side} requires at most $O(n^2)$ randomization tests.
\end{theorem}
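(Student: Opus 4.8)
The plan is a two-pointer (monotone staircase) counting argument: fix $N_{11}$, bound the number of randomization tests performed in the inner sweep over $(N_{01}, N_{10})$ by $O(n)$, and then multiply by the $O(n)$ admissible values of $N_{11}$.

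First I would fix $N_{11} \in \{0, 1, \ldots, n_{11}+n_{01}\}$ and examine the inner loop described in Section~\ref{subsec:two_side}. For $N_{01} = 0, 1, 2, \ldots$ the procedure locates $\underline{N}_{10}(N_{11}, N_{01})$ by running randomization tests at $N_{10} = s_{N_{01}}, s_{N_{01}}+1, s_{N_{01}}+2, \ldots$, where the starting value is $s_0 = 0$ and $s_{N_{01}} = \underline{N}_{10}(N_{11}, N_{01}-1)$ for $N_{01} \ge 1$. This warm start is legitimate precisely because Theorem~\ref{thm::order-two-sided}(1) gives $\underline{N}_{10}(N_{11}, N_{01}) \ge \underline{N}_{10}(N_{11}, N_{01}-1)$, so the index $N_{10}$ never backtracks. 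The $N_{10}$-sweep at a given $N_{01}$ terminates either at the first $N_{10}$ with $p_2(\boldsymbol{N}) \ge \alpha$ and $\tau(\boldsymbol{N}) \le \hat{\tau}^{\obs}$ (one successful test), or else, with no further test, as soon as $N_{10}$ would exceed $\lfloor N_{01} + n\hat{\tau}^{\obs}\rfloor$, in which case $\underline{N}_{10}(N_{11}, N_{01})$ is set to $\lfloor N_{01} + n\hat{\tau}^{\obs}\rfloor + 1$ by definition; in both cases this value becomes the starting point $s_{N_{01}+1}$ for the next $N_{01}$.

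Next I would count the tests. Writing $e_{N_{01}} := \underline{N}_{10}(N_{11}, N_{01})$, we have $s_0 = 0$, $s_{N_{01}+1} = e_{N_{01}}$, and $s_{N_{01}} \le e_{N_{01}}$ by Theorem~\ref{thm::order-two-sided}(1). The number of randomization tests performed at a fixed $N_{01} = c$ is at most $e_c - s_c + 1$: the failures occur at $N_{10} = s_c, \ldots, e_c - 1$, preceding at most one successful test. Summing over $c = 0, 1, \ldots, C$, where $C$ is at most $n$, and telescoping with $s_{c+1} = e_c$, the $(e_c - s_c)$ terms contribute $e_C - s_0 = e_C \le n + 1$, while the $+1$ terms contribute $C + 1 \le n + 1$. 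Hence for each fixed $N_{11}$ the procedure performs at most $2(n+1) = O(n)$ randomization tests.

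Finally, running this over $N_{11} \in \{0, 1, \ldots, n_{11}+n_{01}\}$, i.e.\ $O(n)$ values, gives $O(n^2)$ randomization tests; performing the symmetric sweep over the potential tables with $\tau(\boldsymbol{N}) \ge \hat{\tau}^{\obs}$ (obtained by switching the labels of $Y$) at most doubles the count, so the total is still $O(n^2)$. The one step that needs care is making the staircase rigorous — verifying that within each fixed $N_{11}$ the index $N_{10}$ is nondecreasing in $N_{01}$ along the computed path, which is exactly the content of Theorem~\ref{thm::order-two-sided}(1) and is what legitimizes the telescoping sum; the remainder is elementary bookkeeping.
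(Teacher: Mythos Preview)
Your proposal is correct and follows essentially the same telescoping (two-pointer) argument as the paper: fix $N_{11}$, use Theorem~\ref{thm::order-two-sided}(1) to warm-start the $N_{10}$-search at $\underline{N}_{10}(N_{11},N_{01}-1)$, telescope the per-$N_{01}$ costs to $O(n)$, then multiply by the $O(n)$ choices of $N_{11}$ and double for the label-switched sweep. One small imprecision: your bound $e_C\le n+1$ is not literally guaranteed by the stated definition of $\underline{N}_{10}$ in the ``no such $N_{10}$'' case (where $\underline{N}_{10}=\lfloor N_{01}+n\hat\tau^{\obs}\rfloor+1$ can exceed $n$); the paper handles this by capping with $n\wedge\underline{N}_{10}(N_{11},k)$ before telescoping, and you can do the same---or simply note that actual tests occur only at $N_{10}\le n-N_{11}$---to make the $O(n)$ bound per $N_{11}$ airtight.
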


\begin{proof}[Proof of Theorem \ref{thm::computation-two-sided}]
For any fixed $N_{11}$, it must be true that $\underline{N}_{10}(N_{11},N_{01}) = \lfloor N_{01} + n \hat{\tau}^{\obs} \rfloor+1$ for $0\leq N_{01} <  0 \vee \lceil -n \hat{\tau}^{\text{obs}}\rceil$ without doing any randomization tests, where $\lceil x \rceil$ denotes the smallest integer greater than or equal to $x$.
 Hence,
in order to get $\underline{N}_{10}(N_{11}, N_{01})$ for all $ 0 \leq N_{01}\leq n-N_{11}$, the number of randomization tests needed is less than or equal to 
$$ n \wedge \underline{N}_{10}(N_{11},  0 \vee \lceil -n \hat{\tau}^{\text{obs}}\rceil) + 1 + \sum_{k= 0 \vee \lceil -n \hat{\tau}^{\text{obs}}\rceil+1}^{n-N_{11}}\{ n \wedge \underline{N}_{10}(N_{11}, k)- n \wedge \underline{N}_{10}(N_{11}, k-1)+1\} \leq 2n+1,$$
where $n \wedge \underline{N}_{10}(N_{11},  0 \vee \lceil -n \hat{\tau}^{\text{obs}}\rceil) + 1$ bounds the number of randomization tests needed for $N_{01}=0 \vee \lceil -n \hat{\tau}^{\text{obs}}\rceil$, and $n \wedge \underline{N}_{10}(N_{11}, k)- n \wedge \underline{N}_{10}(N_{11}, k-1)+1$ bounds the number of randomization tests needed for $N_{01}=k\ (0 \vee \lceil -n \hat{\tau}^{\text{obs}}\rceil+1\leq k\leq n-N_{11})$.

Because the number of possible values of $N_{11}$ is less than $(n+1)$, the total number of randomization tests needed for calculating the lower confidence limit is less than or equal to $(2n+1)(n+1)=O(n^2)$. The computation for the upper limit of $\tau$ is the same as the lower limit by switching the labels of $Y$. Therefore, the total number of randomization tests needed is at most $O(n^2)$.
\end{proof}

\begin{theorem}
\label{thm::computation-one-sided}
The procedure for the one-sided confidence interval in Section \ref{subsec:one_side} requires at most $O(n^2)$ randomization tests.
\end{theorem}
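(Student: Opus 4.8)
The plan is to mirror the argument for Theorem \ref{thm::computation-two-sided} almost verbatim, since the structure of the one-sided procedure in Section \ref{subsec:one_side} is the same: for each fixed $N_{11}$ we sweep $N_{01}$ upward, and Theorem \ref{thm::order-one-sided}(1) guarantees $\underline{N}_{10}(N_{11}, N_{01})$ is nondecreasing in $N_{01}$, so successive searches resume where the previous one stopped rather than restarting from $N_{10}=0$. First I would fix $N_{11}$ and observe that when $N_{01}=0$ the search for $\underline{N}_{10}(N_{11},0)$ performs at most $n \wedge \underline{N}_{10}(N_{11},0) + 1$ randomization tests (we test $N_{10}=0,1,\dots$ until $p_1 \geq \alpha$ or until $N_{10}$ exceeds $n-N_{11}$, and recall $\underline{N}_{10}$ is capped at $n+1$ when no valid $N_{10}$ exists, so at most $n+1$ tests suffice even in the worst case). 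Then for each subsequent $N_{01}=k$ with $1 \leq k \leq n-N_{11}$, the search starts at $N_{10} = \underline{N}_{10}(N_{11},k-1)$ and runs up to $\underline{N}_{10}(N_{11},k)$, costing at most $n\wedge\underline{N}_{10}(N_{11},k) - n\wedge\underline{N}_{10}(N_{11},k-1) + 1$ tests.

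Next I would sum this telescoping bound over $k$ from $1$ to $n-N_{11}$: the differences of $n\wedge\underline{N}_{10}(N_{11},k)$ collapse, leaving at most $n\wedge\underline{N}_{10}(N_{11},n-N_{11}) - n\wedge\underline{N}_{10}(N_{11},0) \leq n$, plus the $+1$ terms contributing at most $n-N_{11} \leq n$, plus the initial $n+1$ tests for $N_{01}=0$. This gives a bound of the form $O(n)$ randomization tests for each fixed $N_{11}$ — concretely something like $3n+2$, but the exact constant is immaterial. Since $N_{11}$ ranges over at most $n+1$ values (in fact over $\{0,1,\dots,n_{11}+n_{01}\}$, but the cruder bound suffices), the total for the lower confidence limit is $O(n)\cdot O(n) = O(n^2)$.

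Finally, I would note that the upper confidence limit is obtained by the symmetric procedure after switching the labels of $Y$ (as stated in Section \ref{subsec:one_side}), which has the identical cost $O(n^2)$; adding the two still gives $O(n^2)$. I do not anticipate any genuine obstacle here — the only thing to be careful about is the handling of the cap $\underline{N}_{10} = n+1$ when no valid $N_{10}$ exists, so that the ``search'' for such an $N_{01}$ terminates after scanning through $N_{10} = n-N_{11}$ (at most $O(n)$ tests) rather than running forever, and the telescoping bound remains valid because $n \wedge \underline{N}_{10}(N_{11},k) \leq n$ uniformly. This is entirely parallel to the two-sided case and requires only the order information in Theorem \ref{thm::order-one-sided}, which is already established.
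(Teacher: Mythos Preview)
Your argument is essentially identical to the paper's: fix $N_{11}$, bound the $N_{01}=0$ search by $n\wedge\underline{N}_{10}(N_{11},0)+1$, bound each subsequent $N_{01}=k$ by the telescoping increment $n\wedge\underline{N}_{10}(N_{11},k)-n\wedge\underline{N}_{10}(N_{11},k-1)+1$, sum to get $O(n)$ per $N_{11}$, and multiply by the at most $n+1$ values of $N_{11}$. The paper gets the slightly sharper constant $2n+1$ (because the initial term cancels in the telescope rather than being bounded separately), but as you note the constant is immaterial.

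One small correction: your final paragraph about computing the upper limit by switching the labels of $Y$ is unnecessary and not actually stated in Section~\ref{subsec:one_side}. The one-sided interval there has the form $[l,(n_{11}+n_{00})/n]$, so the upper endpoint is the trivial maximum and requires no randomization tests at all; only the lower limit $l$ is computed. Dropping that paragraph leaves the proof complete and matching the paper's.
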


\begin{proof}[Proof of Theorem \ref{thm::computation-one-sided}]
For any fixed $N_{11}$, in order to get $\underline{N}_{10}(N_{11}, N_{01})$ for all $0\leq N_{01}\leq n-N_{11}$, the number of randomization test needed is less than or equal to
$$
n \wedge \underline{N}_{10}(N_{11}, 0) + 1 + \sum_{k=1}^{n-N_{11}}\{n \wedge  \underline{N}_{10}(N_{11}, k)- n \wedge \underline{N}_{10}(N_{11}, k-1)+1\} \leq 2n+1,
$$
where $n \wedge \underline{N}_{10}(N_{11}, 0) + 1$ bounds the number of randomization tests needed for $N_{01}=0$, and $n \wedge \underline{N}_{10}(N_{11}, k)- n \wedge \underline{N}_{10}(N_{11}, k-1)+1$ bounds the number of randomization tests needed for $N_{01}=k\  (1\leq k\leq n-N_{11})$.

Because the number of possible values of $N_{11}$ is less than $(n+1)$, the total number of randomization tests needed is less than or equal to $(2n+1)(n+1)=O(n^2)$.
\end{proof}

Mathematically, by inverting a series of randomization tests, we obtain confidence sets for $\tau.$ These confidence sets in Sections \ref{subsec:two_side} and \ref{subsec:one_side} may not be intervals. The final theorems rule out this possibility, and confirm that these confidence sets are indeed confidence intervals. In order to prove the final two theorems, we need to introduce the following lemma.

\begin{lemma}
\label{lemma:exist_compat_potential_table}
For any potential table $\boldsymbol{N}$ compatible with the observed table $\boldsymbol{n}$,
if $\tau(\boldsymbol{N})<(n_{11}+n_{00})/n$, then there exist a potential table $\boldsymbol{N} ' $ such that $\boldsymbol{N} ' $ is compatible with the observed table and $\boldsymbol{N} '  = \boldsymbol{N} + \boldsymbol{\Delta}$ with $\boldsymbol{\Delta} \in \Tau$.
\end{lemma}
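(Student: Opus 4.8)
The plan is to combine the assignment-based characterization of compatibility from the proof of Theorem~\ref{compat_st} with the one-unit coupling of Lemma~\ref{lemma::diff1}. Since $\boldsymbol{N}$ is compatible with $\boldsymbol{n}$, I fix a witnessing assignment: nonnegative integers $x_{ik}$ with $0\le x_{ik}\le N_{ik}$, $\sum_{ik}x_{ik}=m$, $x_{11}+x_{10}=n_{11}$, and $(N_{11}-x_{11})+(N_{01}-x_{01})=n_{01}$, so that treating exactly $x_{ik}$ of the $N_{ik}$ units in cell $(i,k)$ reproduces $\boldsymbol{n}$. For a move $\boldsymbol{\Delta}\in\Tau$, Lemma~\ref{lemma::diff1} changes one unit $r$ from cell $(a_1,a_2)$ to cell $(b_1,b_2)$ as in Table~\ref{tb::twotables}, and the key observation is that if $r$ sits in the arm in which its observed outcome is left unchanged by the flip---the treated arm when $a_1=b_1$, the control arm when $a_2=b_2$---then the very same assignment witnesses that $\boldsymbol{N}+\boldsymbol{\Delta}$ is compatible with $\boldsymbol{n}$, because $Y_r$, and hence $\boldsymbol{n}$, is unchanged.

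Reading Table~\ref{tb::twotables}, this yields four ``available move'' conditions on the fixed assignment: $\boldsymbol{\Delta}=(0,1,0,-1)$ requires a control unit in cell $(0,0)$, i.e.\ $x_{00}<N_{00}$; $\boldsymbol{\Delta}=(-1,1,0,0)$ requires a treated unit in cell $(1,1)$, i.e.\ $x_{11}\ge 1$; $\boldsymbol{\Delta}=(1,0,-1,0)$ requires a control unit in cell $(0,1)$, i.e.\ $x_{01}<N_{01}$; and $\boldsymbol{\Delta}=(0,0,-1,1)$ requires a treated unit in cell $(0,1)$, i.e.\ $x_{01}\ge 1$. So it is enough to show at least one of these holds. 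Suppose, toward a contradiction, that all four fail; then $x_{00}=N_{00}$, $x_{11}=0$, $x_{01}=N_{01}$, and $x_{01}=0$, so in particular $N_{01}=0$.

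Plugging $x_{11}=0$, $x_{01}=0$, $N_{01}=0$ into the identities $x_{01}=N_{11}+N_{01}-n_{01}-x_{11}$ and $x_{00}=x_{11}+n_{01}+n_{10}-N_{11}-N_{01}$ from the proof of Theorem~\ref{compat_st} forces $N_{11}=n_{01}$ and, using $x_{00}=N_{00}$, also $N_{00}=n_{10}$. Then $N_{10}=n-N_{11}-N_{01}-N_{00}=n_{11}+n_{00}$, so $\tau(\boldsymbol{N})=(N_{10}-N_{01})/n=(n_{11}+n_{00})/n$, contradicting the hypothesis $\tau(\boldsymbol{N})<(n_{11}+n_{00})/n$. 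Hence some move is available, and for the corresponding $\boldsymbol{\Delta}\in\Tau$ the table $\boldsymbol{N}'=\boldsymbol{N}+\boldsymbol{\Delta}$ is compatible with $\boldsymbol{n}$, as claimed.

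The one place requiring care is the first paragraph: correctly matching each $\boldsymbol{\Delta}\in\Tau$ with the arm the flipped unit must occupy so that its realized outcome---and therefore the entire observed table---is preserved, and then translating ``such a unit exists'' into the right inequality on the $x_{ik}$. Once those four inequalities are pinned down, the rest is a short computation with identities already derived in the proof of Theorem~\ref{compat_st}, so I do not anticipate further obstacles.
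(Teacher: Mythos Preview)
Your proof is correct and is essentially identical to the paper's own argument: both fix a witnessing assignment $(x_{ik})$, match each $\boldsymbol{\Delta}\in\Tau$ to a unit in the appropriate arm whose observed outcome is unaffected by the flip, and then derive from the simultaneous failure of all four availability conditions that $\boldsymbol{N}=(n_{01},n_{11}+n_{00},0,n_{10})$, contradicting $\tau(\boldsymbol{N})<(n_{11}+n_{00})/n$. Your exposition of why the observed table is preserved is slightly more explicit than the paper's, but the logic and computations are the same.
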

\begin{proof}[Proof of Lemma \ref{lemma:exist_compat_potential_table}]
Because $\boldsymbol{N}$ is compatible with the observed table $\boldsymbol{n}$, there exist potential outcomes $\{y_j(1), y_j(0)\}_{j=1}^n$, summarized by $\boldsymbol{N}$, that give the observed table $\boldsymbol{n}$ under the treatment assignment $\boldsymbol{Z}$.
We construct potential outcomes $\{y_j'(1), y_j'(0)\}_{j=1}^n$ different from $\{y_j(1), y_j(0)\}_{j=1}^n$ by only one unit $r$, i.e. $Z_r = z, (y_r(1), y_r(0))=(a_1, a_2), (y'_r(1), y'_r(0))=(b_1, b_2)$, and $(y_j(1),y_j(0))=(y'_j(1),y'_j(0))$ for all $j\neq r$. We show the corresponding values of $z, (a_1, a_2)$ and $(b_1,b_2)$ in Table \ref{table:construct_compat_N_1}, where $\boldsymbol{N} ' $ denotes the potential table summarizing $\{y_j'(1), y_j'(0)\}_{j=1}^n$, and $\boldsymbol{\Delta}=\boldsymbol{N}_1-\boldsymbol{N}$.

\begin{table}[ht]
\caption{Constructing potential table $\boldsymbol{N} ' $}
\label{table:construct_compat_N_1}
\begin{center}
\begin{tabular}{|c|c|c|c|}
\hline
$z$ & $(a_1,a_2)$ & $(b_1, b_2)$ & $\boldsymbol{N} ' -\boldsymbol{N}=\boldsymbol{\Delta}$\\
\hline
$0$ & $(0,0)$ & $(1,0)$ & $(0,1,0,-1)$\\
$1$& $(1,1)$ & $(1,0)$ & $(-1,1,0,0)$\\
$0$& $(0,1)$ & $(1,1)$ & $(1,0,-1,0)$\\
$1$ & $(0,1)$ & $(0,0)$ & $(0,0,-1,1)$\\
\hline
\end{tabular}
\end{center}
\end{table}

Because potential outcomes $\{y_j'(1), y_j'(0)\}_{j=1}^n$ give the same observed table $\boldsymbol{n}$ under the treatment assignment $\boldsymbol{Z}$, $\boldsymbol{N} ' $ is compatible with the observed table. We need only to show that unit $r$ exists if $\tau(\boldsymbol{N})<(n_{11}+n_{00})/n$. If such unit $r$ does not exist, then the following must be true:
\begin{align}\label{eq:conditon_not_exist}
N_{00}-x_{00}=0,  \quad x_{11}=0, \quad N_{01}-x_{01}=0, \quad x_{01}=0,
\end{align}
recalling that $x_{ik}$ denotes the number of units in set $\{j : y_j(1)=i, y_j(0)=k\} $ that are assigned to treatment under the treatment assignment $\boldsymbol{Z}$. Formula (\ref{eq:conditon_not_exist}) implies $\boldsymbol{N}=(n_{01}, n_{11}+n_{00}, 0, n_{10})$ and $\tau(\boldsymbol{N})=(n_{11}+n_{00})/n$, which contradicts $\tau(\boldsymbol{N})<(n_{11}+n_{00})/n$. Therefore, (\ref{eq:conditon_not_exist}) cannot hold and the unit $r$ must exist, and Lemma \ref{lemma:exist_compat_potential_table} holds.
\end{proof}

\begin{theorem}
\label{thm::form_two_sided}
In balanced experiments,
the final two-sided confidence set in Section \ref{subsec:two_side}  must have the form $[ l, u]$ for some values $l$ and $u$, in the sense that for every possible $\tau \in [l,u]$, there exists a potential table $\boldsymbol{N}$ compatible with observed table that satisfies $\tau(\boldsymbol{N})=\tau$ and $p_2(\boldsymbol{N})\geq \alpha$.
\end{theorem}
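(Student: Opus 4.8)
The plan is to show that the confidence set is ``gap-free'': if two values of $\tau$ are both accepted, every intermediate value is accepted too. Since the accepted $\tau$ values range over multiples of $1/n$ between $-(n_{10}+n_{01})/n$ and $(n_{11}+n_{00})/n$, it suffices to show that whenever $\tau_0 < (n_{11}+n_{00})/n$ is accepted --- i.e., there is a compatible potential table $\boldsymbol{N}$ with $\tau(\boldsymbol{N})=\tau_0$ and $p_2(\boldsymbol{N})\geq\alpha$ --- and $\tau_0 < u$ where $u$ is the largest accepted value, then $\tau_0 + 1/n$ is also accepted. Symmetrically, if $\tau_0 > l$, then $\tau_0 - 1/n$ is accepted. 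Actually it is cleaner to argue directly: I will show the set of accepted $\tau$ with $\tau \leq \hat{\tau}^{\obs}$ is an interval (of the grid), and likewise for $\tau \geq \hat{\tau}^{\obs}$, and then note that $\hat{\tau}^{\obs}$ itself is always accepted (the ``true'' table has $p_2 = 1 \geq \alpha$), so the union is an interval.

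First I would handle the side $\tau \leq \hat{\tau}^{\obs}$. Take a compatible $\boldsymbol{N}$ with $p_2(\boldsymbol{N})\geq\alpha$ and $\tau(\boldsymbol{N}) = \tau_0 < \hat{\tau}^{\obs}$; I want a compatible $\boldsymbol{N}'$ with $p_2(\boldsymbol{N}')\geq\alpha$ and $\tau(\boldsymbol{N}') = \tau_0 + 1/n$. Apply Lemma \ref{lemma:exist_compat_potential_table}: since $\tau_0 < \hat{\tau}^{\obs} \leq (n_{11}+n_{00})/n$, there is a compatible $\boldsymbol{N}' = \boldsymbol{N} + \boldsymbol{\Delta}$ with $\boldsymbol{\Delta}\in\Tau$, so $\tau(\boldsymbol{N}') = \tau_0 + 1/n$. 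We still need $\tau(\boldsymbol{N}')\leq\hat{\tau}^{\obs}$; this holds because $\tau(\boldsymbol{N}') = \tau_0 + 1/n$ and both are grid points $\leq \hat{\tau}^{\obs}$ --- more precisely, if $\tau_0 + 1/n > \hat{\tau}^{\obs}$ then $\tau_0 + 1/n$ already equals or exceeds $\hat{\tau}^{\obs}$ and belongs to the ``other side,'' which I treat separately; so I may assume $\tau(\boldsymbol{N}')\leq\hat{\tau}^{\obs}$. Then Lemma \ref{lemma:order_p2_onemove} (balanced case, $\boldsymbol{\Delta}\in\Tau$) gives $p_2(\boldsymbol{N}')\geq p_2(\boldsymbol{N})\geq\alpha$, so $\tau_0+1/n$ is accepted. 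Iterating, every grid point in $[\tau_0, \hat{\tau}^{\obs}]$ is accepted, so the accepted values with $\tau\leq\hat{\tau}^{\obs}$ form a down-set below $\hat{\tau}^{\obs}$ that is an interval with upper endpoint $\hat{\tau}^{\obs}$; call its lower endpoint $l$.

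For the side $\tau \geq \hat{\tau}^{\obs}$, I would invoke the symmetry used in Section \ref{subsec:two_side}: switching the labels of $Y$ maps the problem to the case $\tau \leq -\hat{\tau}^{\obs}$ for the relabeled data, and the same argument (Lemma \ref{lemma:exist_compat_potential_table} plus Lemma \ref{lemma:order_p2_onemove}) shows the accepted values with $\tau\geq\hat{\tau}^{\obs}$ form an interval with lower endpoint $\hat{\tau}^{\obs}$; call its upper endpoint $u$. Finally, the potential table corresponding to the true potential outcomes is compatible, has some $\tau$ value, and has $p_2 = 1$; but more to the point, one checks that $\hat{\tau}^{\obs}$ is a grid point (indeed $\hat{\tau}^{\obs} = \hat\tau$ need not be, but there is always a compatible table with $\tau$ closest to $\hat{\tau}^{\obs}$ on each side and $p_2$ near $1$) --- the clean statement is that $\lfloor n\hat{\tau}^{\obs}\rfloor/n$ and $\lceil n\hat{\tau}^{\obs}\rceil/n$ are both accepted, which pins the two sub-intervals together at $\hat{\tau}^{\obs}$. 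Hence the full accepted set is $[l,u]$ on the grid, as claimed.

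The main obstacle is the bookkeeping at $\tau = \hat{\tau}^{\obs}$: I need to be sure the two one-sided chains actually overlap so their union is a single interval rather than two disjoint pieces, and I need to confirm a compatible table attaining (or straddling) $\hat{\tau}^{\obs}$ with $p_2\geq\alpha$ always exists --- this should follow because the table built from the realized potential outcomes is compatible and has $p_2 = P_{\boldsymbol{N}}(|\hat\tau - \tau|\geq 0) = 1$, and applying Lemma \ref{lemma:exist_compat_potential_table} from there in both directions reaches the grid points adjacent to $\hat{\tau}^{\obs}$. A secondary point requiring care is that Lemma \ref{lemma:order_p2_onemove} and Lemma \ref{lemma:exist_compat_potential_table} must be applied to the \emph{same} $\boldsymbol{\Delta}\in\Tau$; since both lemmas are stated for arbitrary $\boldsymbol{\Delta}\in\Tau$ and Lemma \ref{lemma:exist_compat_potential_table} produces a specific $\boldsymbol{\Delta}$, this matches up, but the write-up should say so explicitly.
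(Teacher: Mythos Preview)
Your core argument is the same as the paper's: for $\tau<\hat{\tau}^{\obs}$ use Lemma~\ref{lemma:exist_compat_potential_table} to produce a compatible $\boldsymbol{N}'=\boldsymbol{N}+\boldsymbol{\Delta}$ with $\boldsymbol{\Delta}\in\Tau$, then apply Lemma~\ref{lemma:order_p2_onemove} (balanced case) to get $p_2(\boldsymbol{N}')\ge p_2(\boldsymbol{N})\ge\alpha$; handle $\tau>\hat{\tau}^{\obs}$ by switching the labels of $Y$. The paper's proof is exactly this, stated in two sentences.

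Two remarks on your write-up. First, the ``true potential table'' detour is both unnecessary and incorrect: that table does \emph{not} have $p_2=1$ unless its $\tau$ happens to equal $\hat{\tau}^{\obs}$, and in any case it is unknown to the analyst, so it cannot be invoked in constructing the confidence set. Second, the bookkeeping at $\hat{\tau}^{\obs}$ that worries you takes care of itself. In a balanced experiment $\hat{\tau}^{\obs}=2(n_{11}-n_{01})/n$ is itself a grid point, so if any accepted $\tau<\hat{\tau}^{\obs}$ exists, your increment chain reaches $\hat{\tau}^{\obs}$ exactly; symmetrically the decrement chain from above reaches $\hat{\tau}^{\obs}$. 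Moreover, if the minimum accepted value $l$ satisfied $l>\hat{\tau}^{\obs}$, the decrement step would yield an accepted $l-1/n\ge\hat{\tau}^{\obs}$, contradicting minimality of $l$; hence $l\le\hat{\tau}^{\obs}\le u$ whenever the set is nonempty, and the two sub-intervals automatically overlap at $\hat{\tau}^{\obs}$. No separate existence argument is needed.
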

\begin{proof}[Proof of Theorem \ref{thm::form_two_sided}]
For any $\tau < \hat{\tau}^\obs$, if there exists a potential table $\boldsymbol{N}$ satisfying $\tau(\boldsymbol{N})=\tau$ and $p_2(\boldsymbol{N})\geq \alpha$, then according to Lemma \ref{lemma:exist_compat_potential_table} and Lemma \ref{lemma:order_p2_onemove}, there exists a potential table $\boldsymbol{N}'$ satisfying $\tau(\boldsymbol{N}')=\tau+1/n\leq \hat{\tau}^\obs$ and $p_2(\boldsymbol{N}')\geq \alpha$. Similarly, by changing the labels of $Y$, we know that for any $\tau > \hat{\tau}^\obs$, if there exists a potential table $\boldsymbol{N}$ satisfying $\tau(\boldsymbol{N})=\tau$ and $p_2(\boldsymbol{N})\geq \alpha$, then there exists a potential table $\boldsymbol{N}''$ satisfying $\tau(\boldsymbol{N}'')=\tau-1/n$ and $p_2(\boldsymbol{N}'')\geq \alpha$. Therefore, Theorem \ref{thm::form_two_sided} holds.
\end{proof}

\begin{theorem}
\label{thm::form_one_sided}
The final confidence set in Section \ref{subsec:one_side}
must have the form $[ l, (n_{11}+n_{00})/n ]$ for some value $l$, in the sense that for every possible $\tau\geq l$, there exists a potential table $\boldsymbol{N}$ compatible with observed table that satisfies $\tau(\boldsymbol{N})=\tau$ and $p_1(\boldsymbol{N})\geq \alpha$.
\end{theorem}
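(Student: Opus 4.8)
The plan is to follow the template of the proof of Theorem~\ref{thm::form_two_sided}, which is actually simpler in the one-sided case because Lemma~\ref{lemma:order_p1_onemove} gives monotonicity of $p_1$ for \emph{every} $\boldsymbol{\Delta}\in\Tau$, with no balance or sign restriction. Let $\mathcal{C}$ denote the confidence set, i.e.\ the collection of values $\tau(\boldsymbol{N})$ over all potential tables $\boldsymbol{N}$ compatible with $\boldsymbol{n}$ and satisfying $p_1(\boldsymbol{N})\geq\alpha$; recall that every $\boldsymbol{N}$ compatible with $\boldsymbol{n}$ has $\tau(\boldsymbol{N})\leq(n_{11}+n_{00})/n$. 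I would reduce the theorem to two claims: (i) $(n_{11}+n_{00})/n\in\mathcal{C}$, so that $\mathcal{C}$ is non-empty and its largest element is the largest attainable $\tau$; and (ii) whenever $\tau\in\mathcal{C}$ with $\tau<(n_{11}+n_{00})/n$, also $\tau+1/n\in\mathcal{C}$.

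For (i) I would use the extremal table $\boldsymbol{N}^{*}=(n_{01},\,n_{11}+n_{00},\,0,\,n_{10})$. Its entries are non-negative and sum to $n$, and $\tau(\boldsymbol{N}^{*})=(n_{11}+n_{00})/n$; compatibility with $\boldsymbol{n}$ can be checked directly from Theorem~\ref{compat_st} (both sides of the inequality there reduce to $0$), or by writing down explicit potential outcomes. The substantive point is that $p_1(\boldsymbol{N}^{*})=1$. Under any assignment, let $x_{11},x_{10},x_{00}$ be the numbers of treated units of types $(1,1),(1,0),(0,0)$ respectively (there are no $(0,1)$ units since $N^{*}_{01}=0$), so the induced statistic is $\hat\tau=(x_{11}+x_{10})/m-(n_{01}-x_{11})/(n-m)$ and $\hat\tau-\hat\tau^{\obs}=(x_{11}+x_{10}-n_{11})/m+x_{11}/(n-m)$. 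Since $x_{00}\leq N^{*}_{00}=n_{10}$ and $n_{11}+n_{10}=m$, we have $x_{11}+x_{10}=m-x_{00}\geq n_{11}$; combined with $x_{11}\geq0$ this forces $\hat\tau\geq\hat\tau^{\obs}$ for every assignment, hence $p_1(\boldsymbol{N}^{*})=1\geq\alpha$.

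Claim (ii) is immediate from the lemmas already available: given a compatible $\boldsymbol{N}$ with $\tau(\boldsymbol{N})=\tau<(n_{11}+n_{00})/n$ and $p_1(\boldsymbol{N})\geq\alpha$, Lemma~\ref{lemma:exist_compat_potential_table} produces a compatible $\boldsymbol{N}'=\boldsymbol{N}+\boldsymbol{\Delta}$ with $\boldsymbol{\Delta}\in\Tau$, so $\tau(\boldsymbol{N}')=\tau+1/n$, and Lemma~\ref{lemma:order_p1_onemove} gives $p_1(\boldsymbol{N}')\geq p_1(\boldsymbol{N})\geq\alpha$; thus $\tau+1/n\in\mathcal{C}$.

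To conclude, note that by (i) $\mathcal{C}$ is a finite non-empty set of numbers of the form $k/n$, so $l:=\min\mathcal{C}$ is attained by some compatible table. Applying (ii) repeatedly starting from that table, every value $l,\,l+1/n,\,\dots,\,(n_{11}+n_{00})/n$ belongs to $\mathcal{C}$ and is realized by a compatible table; conversely $\mathcal{C}$ contains no value below $l$ (by definition) or above $(n_{11}+n_{00})/n$ (no compatible table reaches beyond it). Since the attainable values of $\tau$ are exactly the multiples of $1/n$ in this range, the confidence set consists precisely of the attainable $\tau$ lying in $[l,(n_{11}+n_{00})/n]$, which is the asserted form. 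The only genuinely computational step is (i) --- in particular, identifying $\boldsymbol{N}^{*}$ and verifying $p_1(\boldsymbol{N}^{*})=1$ is where one must be careful; everything afterward is bookkeeping built on Lemmas~\ref{lemma:exist_compat_potential_table} and~\ref{lemma:order_p1_onemove}.
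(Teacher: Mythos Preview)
Your proof is correct and follows essentially the same route as the paper: the inductive step (your claim~(ii)) is identical to the paper's argument via Lemmas~\ref{lemma:exist_compat_potential_table} and~\ref{lemma:order_p1_onemove}. Your explicit verification that the extremal table $\boldsymbol{N}^{*}=(n_{01},n_{11}+n_{00},0,n_{10})$ has $p_1(\boldsymbol{N}^{*})=1$ (your claim~(i)) is a welcome detail that the paper's terse proof leaves implicit.
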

\begin{proof}[Proof of Theorem \ref{thm::form_one_sided}]
For any possible $\tau<(n_{11}+n_{00})/n$, if there exists a potential table $\boldsymbol{N}$ satisfying $\tau(\boldsymbol{N})=\tau$ and $p_1(\boldsymbol{N})\geq \alpha$, then according to Lemma \ref{lemma:exist_compat_potential_table} and Lemma \ref{lemma:order_p1_onemove}, there exists a potential table $\boldsymbol{N}'$ satisfying $\tau(\boldsymbol{N}')=\tau+1/n$ and $p_1(\boldsymbol{N}')\geq \alpha$. Therefore, Theorem \ref{thm::form_one_sided} holds.
\end{proof}

\end{document}